\newtheorem{remark}{Remark}
\newtheorem{theorem}{Theorem}
\newtheorem{lemma}{Lemma}
\newtheorem{corollary}{Corollary}
\newtheorem{proposition}{Proposition}
\begin{document}

\title{Performance Analysis of Fluid Antenna System under Spatially-Correlated Rician Fading Channels}
\author{Jiangsheng Huangfu, Zhengyu Song,~\IEEEmembership{Member,~IEEE}, \\
Tianwei Hou,~\IEEEmembership{Member,~IEEE}, Anna Li,~\IEEEmembership{Member,~IEEE}, Yuanwei Liu,~\IEEEmembership{Fellow,~IEEE}, \\
Arumugam Nallanathan,~\IEEEmembership{Fellow,~IEEE}, 
and Kai-Kit Wong,~\IEEEmembership{Fellow,~IEEE} 
\vspace{-8mm}

\thanks{This work was supported in part by the Fundamental Research Funds for the Central Universities under Grant 2024JBMC014 and 2023JBZY012, in part by the National Natural Science Foundation for Young Scientists of China under Grant 62201028, in part by Young Elite Scientists Sponsorship Program by CAST under Grant 2022QNRC001, in part by the Beijing Natural Science Foundation L232041, and in part by the Marie Sk\l{}odowska-Curie Fellowship under Grant 101154499, in part by EPSRC grant numbers to acknowledge are EP/W004100/1, EP/W034786/1, EP/Y037243/1 and EP/W026813/1. ({\em Corresponding author: Tianwei Hou}).}
\thanks{Jiangsheng Huangfu and Zhengyu Song are with the School of Electronic and Information Engineering, Beijing Jiaotong University, Beijing 100044, China (e-mail: 21231312@bjtu.edu.cn; songzy@bjtu.edu.cn).}
\thanks{T. Hou is with the School of Electronic and Information Engineering, Beijing Jiaotong University, Beijing 100044, China, and also with the School of Electronic Engineering and Computer Science, Queen Mary University of London, London E1 4NS, U.K.}
\thanks{Anna Li is with the School of Computing and Communications, Lancaster University, Lancaster LA1 4WA, U.K. (e-mail: a.li16@lancaster.ac.uk).}
\thanks{Yuanwei Liu is with the Department of Electrical and Electronic Engineering, The University of Hong Kong, Hong Kong (e-mail: yuanwei@hku.hk).}
\thanks{A. Nallanathan is with the School of Electronic Engineering and Computer Science, Queen Mary University of London, London, U.K. and also with the Department of Electronic Engineering, Kyung Hee University, Yongin-si, Gyeonggi-do 17104, Korea (e-mail; a.nallanathan@qmul.ac.uk).}
\thanks{Kai-Kit Wong is with the Department of Electronic and Electrical Engineering, University College London, London WC1E 7JE, U.K. and also with the Department of Electronic Engineering, Kyung Hee University, Yongin-si, Gyeonggi-do 17104, Korea (e-mail: kitwong@ieee.org).}
}

\maketitle
\begin{abstract}
Fluid antenna systems (FAS) are among the most promising technologies for the sixth generation (6G) mobile communication networks. Unlike traditional fixed-position multiple-input multiple-output (MIMO) systems, a FAS possesses position reconfigurability to switch on-demand among \(N\) predefined ports over a prescribed space. This paper explores the performance of a single-input single-output (SISO) model with a fixed-position antenna transmitter and a single-antenna FAS receiver, referred to as the Rx-SISO-FAS model, under spatially-correlated Rician fading channels. Our contributions include exact expressions and closed-form bounds for the outage probability of the Rx-SISO-FAS model, as well as exact and closed-form lower bounds for the ergodic rate. Importantly, we also analyze the performance considering both uniform linear array (ULA) and uniform planar array (UPA) configurations for the ports of the FAS. To gain insights, we evaluate the diversity order of the proposed model and our analytical results indicate that with a fixed overall system size, increasing the number of ports, \(N\), significantly decreases the outage performance of FAS under different Rician fading factors. Our numerical results further demonstrate that: $i)$ the Rx-SISO-FAS model can enhance performance under spatially-correlated Rician fading channels over the fixed-position antenna counterpart; $ii)$ the Rician factor negatively impacts performance in the low signal-to-noise ratio (SNR) regime; $iii$) FAS can outperform an $L$ branches maximum ratio combining (MRC) system under Rician fading channels; and $iv)$ when the number of ports is identical, UPA outperforms ULA.
\end{abstract}

\begin{IEEEkeywords}
Fluid antenna system (FAS), ergodic rate, fixed-position MIMO, outage probability, Rician fading channels.
\end{IEEEkeywords}

\section{Introduction}
After generations of evolution/revolution, wireless communication technologies have continued to reach new heights \cite{5G0}. Among many technological advances, multiple-input multiple-output (MIMO) is arguably the most important of all and its impact has now spanned three generations \cite{MIMO0,MIMO1}. Its ability to increase capacity using the spatial domain beyond time and frequency resources, makes MIMO a timeless technology \cite{MIMO2}. Since the fourth generation (4G), MIMO has been elevated to multiuser MIMO in which a base station (BS) equipped with multiple antennas serves multiple users on the same physical channel to greatly increase system capacity \cite{Wong-mumimo-2002,Wong-mumimo-2003,MU_MIMO2,MU_MIMO1}.

The current fifth generation (5G) has taken another step to have the massive version of MIMO that has the great potential to simplify precoding designs with an excessive number of BS antennas \cite{5G1,5G2}. Contemplating the sixth generation (6G), extra-large MIMO (XL-MIMO) is widely recognized as one of the major enabling technologies going forward \cite{Wang-xlmimo}. There are also promising technologies such as non-orthogonal multiple access (NOMA) \cite{NOMA} and reconfigurable intelligent surfaces (RIS) \cite{RIS} that have generated much interest. Their integration with MIMO has been shown to reduce network interference \cite{MIMO_RIS_NOMA,MIMO_LIS} and enhance communication signals \cite{MIMO_RIS}.

However, over time, the limitations of MIMO have become apparent. Firstly, a large-scale antenna array comes with huge cost of expensive radio frequency (RF) chains and enormous power consumption \cite{MIMO_expensive}. Consequently, hybrid beamforming has been proposed to address the issue of excessive hardware costs \cite{Hybrid_BF}. However, the complexity involved in precoding optimization limits the scalability of XL-MIMO \cite{Villalonga-2022}. Hence, there is a strong desire to find new degree of freedom (DoF) in the physical layer. A natural direction is to deploy more antennas at user equipment (UE) but the limited physical space restricts the number of antennas, as the antenna spacing should be at least half a wavelength \cite{phone}. In a nutshell, MIMO alone may be insufficient and a new DoF will need to be sought \cite{6G}. 

To address these challenges, a new form of reconfigurable antennas advocating shape and position flexibility, known as fluid antenna system (FAS) \cite{Wong-ell2020,FAS_Lee,New-submit2024}, has emerged. The concept is greatly motivated by the recent advances in antenna technologies such as liquid metals like mercury and gallium-based alloys \cite{liquid_antenna1,liquid_antenna2,liquid_antenna3}, non-metallic conductive liquids  \cite{liquid_antenna4,liquid_antenna5,liquid_antenna}, movable antennas using stepper motors \cite{liquid_antenna6,liquid_antenna7}, holographic meta-material antennas \cite{liquid_antenna8,Hoang-2021,Deng-2023}, and pixel reconfigurable antennas \cite{pixel_antenna,Jing-2022}. For liquid antennas and movable antennas, physical limitations on the acceleration and velocity of the structures may restrict their ability to reconfigure at high speeds. In contrast, holographic meta-material antennas and pixel reconfigurable antennas are particularly promising due to their rapid response times, making them more suitable for high-speed reconfiguration. Unlike traditional fixed-position antennas, FAS offers the advantage of not being fixed at a specific location, empowering them to switch to more advantageous positions as needed, providing great diversity and reconfigurability. Proof-of-concepts using fluidic materials and reconfigurable pixels have recently been reported in \cite{Shen-tap_submit2024} and \cite{Zhang-pFAS2024}, respectively.

Back in 2020, FAS was first introduced to wireless communications by Wong {\em et al.}~in \cite{FAS_ER,FAS}. Since then, there have been efforts to improve the modelling accuracy of the spatial correlation among the FAS ports \cite{wong2022closed,Khammassi-2023,ramirez2024new}. While FAS can be deployed at the transmitter and/or receiver side(s), for single-user channels, most studies focus on using FAS at the receiver since channel state information (CSI) is only required at the receiver. According to the nomenclature in \cite{New-submit2024}, this is referred to as the Rx-SISO-FAS model and its diversity has been studied in \cite{FAS_close} considering the model in \cite{Khammassi-2023}. Recently in \cite{New-twc2023}, the dual-MIMO-FAS model which considered multiple fluid antennas at both ends was investigated and the diversity-multiplexing trade-off of that model was derived. The results for FAS have been promising, demonstrating great potential even with a small physical size of FAS. However, previous results appeared to focus primarily on Rayleigh fading channels where only non-line-of-sight (NLoS) component was present. In \cite{FAS_nakagami}, a more general Nakagami-$m$ fading channel model was considered for FAS, but the outage probability (OP) is provided in open form, while closed-form approximation and the ergodic rate (ER) are not included. When the fading parameter of Nakagami-m fading $m = \frac{(\kappa+1)^2}{2\kappa+1}$, the distribution of Nakagami-m is approximately Rician fading with parameter $\kappa$ \cite{wireless}. Also, only a uniform linear array (ULA) configuration for the FAS ports, i.e., the ports are evenly distributed in a linear space, was considered. How the port configuration of FAS, ULA or uniform planar array (UPA), would affect the outage performance is not known.

Motivated by this, in this paper, we analyze the performance of the Rx-SISO-FAS model under spatially-correlated Rician fading channels considering both ULA and UPA port configurations. As in previous studies, we will assume that the CSI is perfectly known. In practice, the CSI can be estimated using the sparse signal processing methods in \cite{Hao-2024,Dai-2023}. 

Our main contributions are summarized as follows:
\begin{itemize}
\item We consider the Rx-SISO-FAS model under Rician fading channels where a position-flexible antenna can switch freely among \(N\) ports at the receiver to have the strongest channel gain. Under this model, we derive new channel statistics such as the joint probability density function (PDF) and cumulative distribution function (CDF) of the channel gain, accounting for the port correlation.
\item Additionally, we obtain accurate expressions for the OP based on the PDF and CDF. Closed-form expressions for upper and lower bounds are subsequently obtained. We then develop exact solutions and closed-form approximations for the ER. The diversity order is derived based on the upper bound of OP, which is shown to be only related to the number of ports, $N$, when the signal-to-noise ratio (SNR) is high enough.
\item Then we consider both ULA and UPA models for FAS when deriving expressions for the OP and ER. Through numerical results, we compare the performance of ULA and UPA, showing that the UPA port configuration can improve the outage performance and increase the ER under different Rician factors.
\item Our simulation results demonstrate that 1) FAS remains an effective technique under Rician fading channels; 2) a strong line-of-sight (LoS) component negatively impacts the performance of FAS at low SNR; 3) FAS can outperform an $L$ branches maximum ratio combining (MRC) system under Rician fading channels; and 4) the UPA port configuration outperforms the ULA counterpart.
\end{itemize}

\subsection{Organization and Notations}
The remainder of this paper is organized as follows. Section \ref{sec:model} introduces the channel model of Rx-SISO-FAS with ULA port configuration. Our analytical results for the Rx-SISO-FAS model with ULA are presented in Section \ref{sec:analysis}. In Section \ref{sec:upa}, we turn our attention to the UPA port configuration and derive new analytical results. Section \ref{sec:results} provides the numerical results and finally, Section \ref{sec:conclude} concludes this paper. 

The following notation will be used throughout this paper. The symbol \( |\cdot| \) represents the absolute value, and \( \mathbb{E}(\cdot) \) denotes the expectation operator. The expression \( X|Y \) signifies that \( X \) is conditioned on \( Y \). Additionally, \( X \sim \mathcal{N}_c (\mu, \sigma^2) \) indicates that the random variable \( X \) follows a complex Gaussian distribution with mean \( \mu \) and variance \( \sigma^2 \). In the absence of the subscript \( c \), it is assumed that \( X \) is a real Gaussian random variable.

\begin{figure}[]
\centering
\includegraphics[width=3.4in]{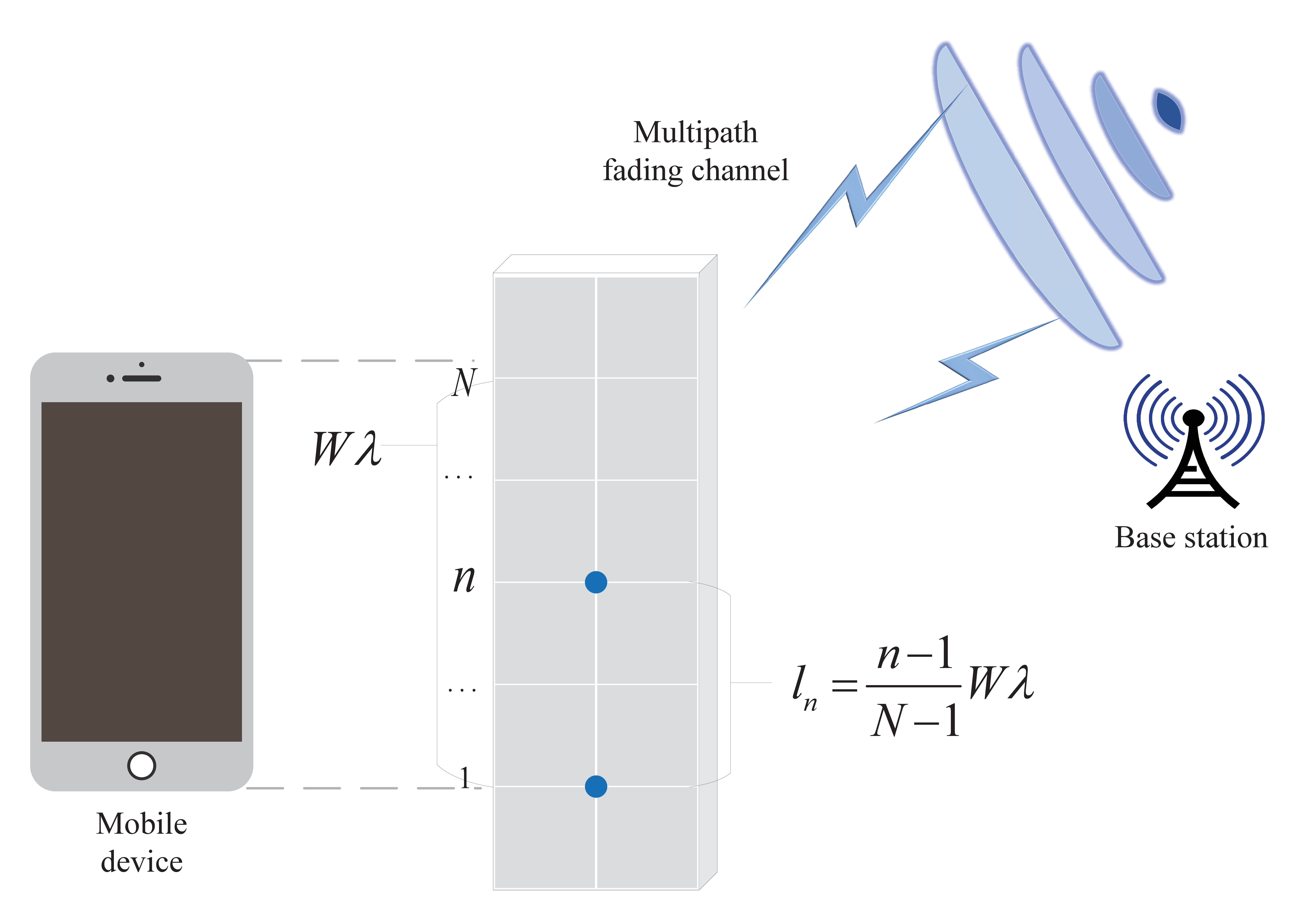}
\caption{The ULA port configuration for FAS.}\label{system1}
\vspace{-3mm}
\end{figure}

\section{System and Channel Model}\label{sec:model}
Here, we consider a single-input single-output (SISO) system where the transmitter is equipped with a conventional fixed-position antenna but the receiver has a FAS with $N$ ports (or flexible positions) linearly distributed along one dimension, as depicted in Fig.~\ref{system1}. This is classified as the Rx-SISO-FAS model according to \cite{New-submit2024} and this port arrangement is referred to as the ULA port configuration. The beauty of FAS is that its reception position can be switched between different ports for optimizing communication performance. 

Assuming equal spacing between two adjacent ports and using the first port as a distance reference, the displacement of the $n$-th port is given by
\begin{equation}\label{l}
{l_{n,1}} = \left(\frac{{n - 1}}{{N - 1}}\right)W\lambda ,~\mbox{for }n = 1,2,\dots,N,
\end{equation}
where $W\lambda$ is the total physical length of FAS, $W$ is an integer, $\lambda$ denotes the carrier wavelength, and $N$ is the number of ports.

The received signal is subjected to an additive white Gaussian noise (AWGN), so that the received signal at the $n$-th port is written as
\begin{equation}\label{AWGN}
{ y_n^{FAS}} = l\left( d \right){h_n}ps + {\eta _n},
\end{equation}
in which $l\left( d \right)$ is the large-scale fading coefficient, $h_n \sim \mathcal{N}_c (A, \sigma^2)$ is the small-scale fading coefficient at the $n$-th port, $p$ is the transmit power of the BS, $\eta_n \sim \mathcal{N}_c (0, \sigma_\eta^2)$ is the noise, and $ s $ represents the transmitted signal with $\mathbb{E}(s)=0$. The large-scale fading coefficient between the BS and mobile device is defined as:
\begin{equation}\label{The large-scale fading channel}
l\left( d \right) = {d^{ - 2 }}{\left( {\frac{\lambda}{{4\pi}}} \right)^2},
\end{equation}
where ${d}$ denotes the distance between the BS and mobile device. 

Since the classic Rayleigh fading channel is limited to modeling strong scattering scenarios, it is not a viable solution for future communication systems. To account for the presence of LoS links between the BS and UE, the PDF of small-scale fading is characterized using the Rician fading channel.The amplitude of the channel, $|h_n|$, follows Rician distribution, and the PDF is given by
\begin{equation}\label{rician dis}
{p_{\left| {{h_{{n}}}} \right|}}(m) = \frac{{2m}}{{{\sigma ^2}}}{e^{ - \frac{{{m^2} + {A^2}}}{{{\sigma ^2}}}}}{I_0}\left( {\frac{{2mA}}{{{\sigma ^2}}}} \right),
\end{equation}
where $I_0(\cdot)$ is the zero-order modified Bessel function of the first kind, and  $\kappa$ denotes the Rician fading factor defined as
\begin{equation}\label{rician factor}
\kappa = \frac{{{A^2}}}{{{\sigma ^2}}},
\end{equation}
where $A^2$ and $\sigma ^2$ can be, respectively, interpreted as the power of LoS path and NLoS path. According to the PDF in (\ref{rician dis}) and the requirement of power normalization,  we can obtain
\begin{equation}\label{rician factor}
\begin{aligned}
\mathbb{E}(|h_n|^2)&=\sigma^2+A^2=1,\\
{A^2} &= \frac{\kappa}{{\kappa + 1}},\\
{\sigma ^2} &= \frac{1}{{\kappa + 1}}.
\end{aligned}
\end{equation}

The received SNR at the $n$-th port can be found as
\begin{equation}\label{SNR}
{\rm SNR}_n = \frac{{l\left( d \right){{\left| {{h_n}} \right|}^2}p{\mathbb{E}}\left( {{{\left| s \right|}^2}} \right)}}{{\sigma _\eta ^2{\rm{ }}}}.
\end{equation}
Hence, the average received SNR can be obtained as
\begin{equation}\label{ESNR}
\begin{aligned}
\overline {\rm SNR}_{n}  &= \frac{l\left( d \right){\mathbb E\left( {{{\left| {{h_n}} \right|}^2}} \right)p{\mathbb E}\left( {{{\left| s \right|}^2}} \right)}}{{\sigma _\eta ^2{\rm{ }}}} \\
&= \frac{{l\left( d \right)\left( {{A^2} + {\sigma ^2}} \right)p{\mathbb E}\left( {{{\left| s \right|}^2}} \right)}}{{\sigma _\eta ^2{\rm{ }}}}.
\end{aligned}
\end{equation}

The different port locations make sure that the multipath arriving have phase differences, resulting in spatial correlation between the channels at different ports. Under rich scattering, the correlation factor can be modelled as \cite{FAS_ER,FAS}
\begin{equation}\label{rou}
\begin{aligned}
{\rho _n} &= {J_0}\left( {2\pi \frac{{{l_{n,1}}}}{\lambda }} \right)\\
& = {J_0}\left( {2\pi \frac{{\left( {n - 1} \right)W\lambda }}{{\left( {N - 1} \right)\lambda }}} \right)\\
 &= {J_0}\left( {\frac{{2\pi (n - 1)}}{{N - 1}}W} \right),\;{\rm{for}}\;n = 1,2, \ldots ,N,
\end{aligned}
\end{equation}
where $J_0(\cdot)$ is the zero-order Bessel function of the first kind. With that, the channels at the ports can be parameterized as
\begin{equation}\label{channel model-0}
\left\{ \begin{aligned}
{h_1} &= \sigma {x_0} + j\sigma {y_0} + A\\
{h_n} &= \sigma \left( {\sqrt {1 - \rho _n^2} {x_n} + {\rho _n}{x_0}} \right) \\
&+ j\sigma \left( {\sqrt {1 - \rho _n^2} {y_n} 
+ {\rho _n}{y_0}} \right) + A\\
&\quad\quad\mbox{for }n=2,\dots,N,
\end{aligned}\right.
\end{equation}
where $x_0,y_0,x_n,y_n \sim \mathcal{N}(0, 0.5)$ and that they are all independent random variables.

\section{Performance Analysis}\label{sec:analysis}
Here, we analyze the performance of FAS by providing the expressions for OP and ER. With position reconfigurability of FAS, it is assumed that the port with the strongest channel gain is always selected. Therefore, the resultant channel gain of FAS, $\left| {{h_{\rm FAS}}} \right|$, is given by
\begin{equation}\label{channel model}
\left| {{h_{\rm FAS}}} \right| = \max \left\{ {\left| {{h_1}} \right|,\left| {{h_2}} \right|,\dots,\left| {{h_N}} \right|} \right\}.
\end{equation}
Before that, we find it important to first derive the joint distribution of ${\left| {{h_1}} \right|,|h_2|,\dots,\left| {{h_N}} \right|}$.

\subsection{PDF and CDF}
\begin{lemma}\label{Lemma1:joint pdf}
With the small-scale fading following Rician distribution, the joint PDF of the channel gains over all the ports of Rx-SISO-FAS, $|h_1|,|h_2|,\dots,|h_N|$, is given by
\begin{multline}\label{jpdf}
{P_{\left| {{h_1}} \right|,\left| {{h_2}} \right|,\dots,\left| {{h_N}} \right|}}({m_1},{m_2},\dots,{m_N})\\
= \prod_{n = 1}^N {\frac{{2{m_n}}}{{{\sigma ^2}(1 - \rho _n^2)}}} {e^{ - \frac{{m_n^2 + \rho _n^2m_1^2 + (1 - \rho _n^2){A^2}}}{{{\sigma ^2}(1 - \rho _n^2)}}}}\\
\times {I_0}\left( {\frac{{2\left( {\sqrt {\rho _n^2m_1^2 + (1 - \rho _n^2){A^2}} {m_n}} \right)}}{{{\sigma ^2}(1 - \rho _n^2)}}} \right),
\end{multline}
where $\rho _1\triangleq 0$.
\end{lemma}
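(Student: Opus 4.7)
The plan is to exploit the conditional-independence structure built into the parameterization (10). Conditional on the shared Gaussian pair $(x_0,y_0)$, each $h_n$ for $n\ge 2$ depends only on the independent pair $(x_n,y_n)$, so $|h_2|,\ldots,|h_N|$ are conditionally independent Rician random variables. Since $(x_0,y_0)$ is in one-to-one correspondence with $h_1$, I can equivalently condition on $h_1$. The joint PDF then factors as the marginal of $|h_1|$ times $N-1$ conditional Rician densities, and the task reduces to identifying each factor with the corresponding term in (12).

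First I would handle the marginal: $h_1\sim\mathcal{N}_c(A,\sigma^2)$ directly from (10), so $|h_1|$ obeys the Rician density (4). With the convention $\rho_1\triangleq 0$, the $n=1$ term of (12) reduces to $\sigma^2(1-\rho_1^2)=\sigma^2$ and $\rho_1^2 m_1^2+(1-\rho_1^2)A^2=A^2$, matching (4) exactly. Next I would address the conditional densities by rewriting, for $n\ge 2$,
\[
h_n = \bigl[A+\sigma\rho_n(x_0+jy_0)\bigr] + \sigma\sqrt{1-\rho_n^2}\,(x_n+jy_n),
\]
which displays $h_n$ given $h_1$ as a complex Gaussian with fixed mean $A+\sigma\rho_n(x_0+jy_0)$ and complex variance $\sigma^2(1-\rho_n^2)$. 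Applying the Rician PDF template (4) to this conditional law identifies the scale as $\sigma^2(1-\rho_n^2)$ and the noncentrality as the magnitude of the conditional mean, which must then be reduced to $\sqrt{\rho_n^2 m_1^2+(1-\rho_n^2)A^2}$ so that the result matches the $n$-th factor in (12). Multiplying the marginal of $|h_1|$ by the $N-1$ conditional Rician densities and using the convention $\rho_1\triangleq 0$ to absorb the marginal term into the product then assembles (12).

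The main obstacle will be reducing the conditional noncentrality parameter to the phase-free form $\sqrt{\rho_n^2 m_1^2+(1-\rho_n^2)A^2}$. Substituting $\sigma(x_0+jy_0)=h_1-A$ and writing $h_1=m_1 e^{j\phi}$ yields $|A+\sigma\rho_n(x_0+jy_0)|^2 = \rho_n^2 m_1^2+(1-\rho_n)^2 A^2+2\rho_n(1-\rho_n)A\,\mathrm{Re}(h_1)$, which still depends on $\phi$ and thus on more than $|h_1|=m_1$. Resolving this step—either by splitting the LoS amplitude $A$ consistently along the $\rho_n$ and $\sqrt{1-\rho_n^2}$ branches so that the cross term collapses, or by integrating out the conditional phase of $h_1$ given $|h_1|=m_1$ against its von-Mises-type density $\propto e^{2Am_1\cos\phi/\sigma^2}$ and invoking an $I_0$ integral identity—is the crux of the derivation. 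Once this reduction is in place, the remainder is routine: substitute into (4) with scale $\sigma^2(1-\rho_n^2)$ and noncentrality $\sqrt{\rho_n^2 m_1^2+(1-\rho_n^2)A^2}$, take the product over $n=1,\ldots,N$, and read off (12).
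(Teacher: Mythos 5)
Your route is the same as the paper's Appendix A: condition on the common pair $(x_0,y_0)$ (equivalently on $h_1$), observe that $|h_2|,\dots,|h_N|$ are then conditionally independent Ricians with scale $\sigma^2(1-\rho_n^2)$, multiply by the Rician marginal of $|h_1|$, and absorb that marginal into the product via the convention $\rho_1\triangleq 0$. The difference is that you stop at the step you yourself call the crux --- reducing the conditional noncentrality $\left|A+\sigma\rho_n(x_0+jy_0)\right|$ to $\sqrt{\rho_n^2 m_1^2+(1-\rho_n^2)A^2}$ --- and only sketch two candidate ways to close it without carrying either out. As a standalone proof the proposal is therefore incomplete: the factor-by-factor identification with \eqref{jpdf} is exactly the step left unproved.

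That said, your diagnosis of the obstacle is correct, and it is worth being explicit about what actually happens. Writing $\sigma(x_0+jy_0)=h_1-A$ gives the conditional mean $A(1-\rho_n)+\rho_n h_1$, whose squared magnitude is $\rho_n^2 m_1^2+(1-\rho_n)^2A^2+2\rho_n(1-\rho_n)A\,\mathrm{Re}(h_1)$; the cross term depends on the phase of $h_1$ and does not disappear upon conditioning on $m_1=|h_1|$ alone, so the clean form $\sqrt{\rho_n^2 m_1^2+(1-\rho_n^2)A^2}$ is not an exact consequence of the parameterization \eqref{channel model-0} unless $A=0$ (the Rayleigh case) or $\rho_n\in\{0,1\}$. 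The paper's own proof does not resolve this either: \eqref{A1} asserts the noncentrality $\sqrt{\rho_n^2(x_0^2+y_0^2+A^2)+(1-\rho_n^2)A^2}$ and then identifies $x_0^2+y_0^2+A^2$ with $m_1^2$ in passing to \eqref{A2}, which silently drops the LoS--NLoS cross term (and the $\sigma^2$ scaling of $x_0^2+y_0^2$). In other words, the lemma holds under the implicit modeling approximation that the conditional mean's magnitude depends on $h_1$ only through $|h_1|$; to match the paper you would impose that approximation rather than prove the reduction, and your proposed phase-integration route would in fact yield a different (non-factorized) exact density.
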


\begin{proof}
See Appendix A.
\end{proof}

\begin{lemma}\label{Lemma2:joint CDF}
With the small-scale fading following Rician distribution with the Rician fading factor, $\kappa$, the joint CDF of the channel gains over all the ports of Rx-SISO-FAS, $|h_1|,|h_2|,\dots,|h_N|$, is given by
\begin{equation}\label{jcdf}
\begin{aligned}
& {F_{\left| {{h_1}} \right|,\left| {{h_2}} \right|,\dots,\left| {{h_N}} \right|}} \left({m_1},{m_2},\dots,{m_N}\right)\\
&= P(\left| {{h_1}} \right| < {m_1},\left| {{h_2}} \right| < {m_2},\dots,\left| {{h_N}} \right| < {m_N})\\
&= \int_0^{{m_1}} {\frac{{2{t_1}}}{{{\sigma ^2}}}{e^{ - \frac{{t_1^2 + {A^2}}}{{{\sigma ^2}}}}}{I_0}\left( {\frac{{2{t_1}A}}{{{\sigma ^2}}}} \right)} \times\\
&\prod\limits_{n = 2}^N {\left( {1 - {Q_1}\left[ {\sqrt {\frac{{2\rho _n^2t_1^2}}{{{\sigma ^2}\left( {1 - \rho _n^2} \right)}} + 2\kappa} ,\sqrt {\frac{{2(1 + \kappa)}}{{{\sigma ^2}(1 - \rho _n^2)}}} {m_n}} \right]} \right)} d{t_1},
\end{aligned}
\end{equation}
where $Q_1( \cdot )$ is the first-order Marcum Q-function \cite{marcumq}.
\end{lemma}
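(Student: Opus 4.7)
The plan is to derive the joint CDF directly from the joint PDF of Lemma 1 by exploiting the conditional-independence structure that is already built into expression \eqref{jpdf}. Starting from the definition
\begin{equation}
F_{|h_1|,\dots,|h_N|}(m_1,\dots,m_N) = \int_0^{m_1}\!\!\cdots\!\int_0^{m_N} P_{|h_1|,\dots,|h_N|}(t_1,\dots,t_N)\,dt_N\cdots dt_1,
\end{equation}
I would isolate the $n=1$ factor in \eqref{jpdf} (using $\rho_1\triangleq 0$) and pull the integral over $t_1$ to the outside. The resulting leading factor is exactly the Rician marginal PDF of $|h_1|$, which matches the outer integrand in the claimed formula.

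Next, I would observe that for each $n\ge 2$ the remaining factor in \eqref{jpdf} is, when viewed as a function of $t_n$ alone with $t_1$ held fixed, the PDF of a Rician random variable with effective LoS amplitude $A'_n\triangleq\sqrt{\rho_n^2 t_1^2+(1-\rho_n^2)A^2}$ and effective scale $\sigma'^2_n\triangleq\sigma^2(1-\rho_n^2)$. Hence, conditionally on $|h_1|=t_1$, the variables $|h_2|,\dots,|h_N|$ are mutually independent and each integral over $t_n\in(0,m_n)$ yields a standard Rician CDF. This reduces the inner $(N-1)$-fold integral to a product.

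I would then invoke the standard Marcum identity: for a Rician variable with parameters $(A',\sigma')$ one has $P(|h|<m)=1-Q_1\!\bigl(\sqrt{2}A'/\sigma',\sqrt{2}m/\sigma'\bigr)$. Plugging in $A'_n$ and $\sigma'_n$ gives the first Marcum argument
\begin{equation}
\sqrt{\frac{2A'^2_n}{\sigma'^2_n}} = \sqrt{\frac{2\rho_n^2 t_1^2}{\sigma^2(1-\rho_n^2)}+\frac{2A^2}{\sigma^2}} = \sqrt{\frac{2\rho_n^2 t_1^2}{\sigma^2(1-\rho_n^2)}+2\kappa},
\end{equation}
after using $\kappa=A^2/\sigma^2$ from \eqref{rician factor}, and the second argument $\sqrt{2(1+\kappa)/(\sigma^2(1-\rho_n^2))}\,m_n$ after the same power-normalization substitution. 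Multiplying these factors together under the outer integral over $t_1$ reproduces \eqref{jcdf} exactly.

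The only nontrivial step, and the one I would be most careful with, is verifying the conditional-Rician identification in the second paragraph: the exponent in \eqref{jpdf} must be rearranged as $-(t_n^2+A'^2_n)/\sigma'^2_n$ modulo the pieces that do not depend on $t_n$, and the Bessel argument must simplify to $2 t_n A'_n/\sigma'^2_n$. Both are straightforward algebraic checks but need to be done cleanly so that the Marcum identity applies in its standard form. Once this is in place, the rest is just bookkeeping of constants and substitution of the normalized $(A^2,\sigma^2)$ in terms of $\kappa$.
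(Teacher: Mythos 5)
Your proposal is correct and follows essentially the same route as the paper's Appendix~B: factor the joint PDF of Lemma~\ref{Lemma1:joint pdf}, recognize each $n\ge 2$ factor conditioned on $|h_1|=t_1$ as a Rician density with effective parameters $v^2=\rho_n^2 t_1^2+(1-\rho_n^2)A^2$ and $\sigma_0^2=\sigma^2(1-\rho_n^2)$, integrate each to a Marcum-$Q$ CDF, and keep the outer integral over $t_1$. The only point worth flagging is one you already identified as needing care, namely the bookkeeping of the $\sqrt{2}$ scale factor and the $\kappa$-substitution in the second Marcum argument, which you carry out the same way the paper does.
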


\begin{proof}
See Appendix B.
\end{proof}

\subsection{OP and ER}
We define the outage event as
\begin{equation}\label{NSNR}
\left\{ {\frac{{\rm SNR}_{\rm FAS}}{{\overline {\rm SNR} }_n} = \frac{{{{\left| {{h_{\rm FAS}}} \right|}^2}}}{{{A^2} + {\sigma ^2}}} < {\gamma _{th}}} \right\},
\end{equation}
where $\gamma _{th}$ represents the SNR threshold for the normalized SNR. With the joint CDF in \textbf{Lemma~\ref{Lemma2:joint CDF}}, we can compute the OP of the proposed FAS. The outage happens when the power of the received signal is lower than the required threshold.

\begin{theorem}\label{Theorem3:outage probability}
With the definition of outage event in~\eqref{NSNR}, the OP is defined as the probability that the channel gain of FAS is lower than the required threshold. In what follows, the OP of the Rx-SISO-FAS model with ULA port configuration under Rician fading channels can be obtained by
\begin{equation}\label{pout}
\begin{aligned}
&{p_{out}}\left( {{\gamma _{th}}} \right) = \int_0^{{\gamma _{th}}} {{e^{ - \left( {\left( {\kappa + 1} \right)t + \kappa} \right)}}{I_0}} \left( {2\sqrt {\kappa(\kappa + 1)t} } \right)\\
& \prod\limits_{n = 2}^N {\left[ {1 - {Q_1}\left( {\sqrt {\frac{{2\rho _n^2(\kappa + 1)t}}{{(1 - \rho _n^2)}} + 2\kappa} ,\sqrt {\frac{{2(1 + \kappa)}}{{(1 - \rho _n^2)}}} \sqrt {{\gamma _{th}}} } \right)} \right]} dt.
\end{aligned}
\end{equation}
\end{theorem}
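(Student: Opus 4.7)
The plan is to express the outage probability directly in terms of the joint CDF from Lemma~\ref{Lemma2:joint CDF} and then simplify. First, I would unpack the outage event in~\eqref{NSNR}: because of the power normalization $A^2+\sigma^2=1$ in~\eqref{rician factor}, it reduces to $|h_{\rm FAS}|^2 < \gamma_{th}$. Combined with the port-selection rule $|h_{\rm FAS}|=\max_n |h_n|$ in~\eqref{channel model}, this event occurs if and only if every $|h_n|<\sqrt{\gamma_{th}}$. Hence
\begin{equation*}
p_{out}(\gamma_{th}) = F_{|h_1|,\dots,|h_N|}\bigl(\sqrt{\gamma_{th}},\dots,\sqrt{\gamma_{th}}\bigr),
\end{equation*}
and the rest of the argument reduces to substitution into the formula of Lemma~\ref{Lemma2:joint CDF}.

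Next, I would substitute $m_1=m_2=\cdots=m_N=\sqrt{\gamma_{th}}$ into the integral in~\eqref{jcdf}, and then perform a change of variable $t=t_1^2$ that converts the outer measure $\frac{2t_1}{\sigma^2}\,dt_1$ into a plain $dt$-measure (up to the $\sigma^2$ prefactor). Plugging in $\sigma^2=1/(\kappa+1)$ and $A^2=\kappa/(\kappa+1)$ from~\eqref{rician factor}, the exponent $(t_1^2+A^2)/\sigma^2$ collapses to $(\kappa+1)t+\kappa$ and the Bessel argument $2t_1A/\sigma^2$ becomes $2\sqrt{\kappa(\kappa+1)t}$, which matches the leading factor $e^{-((\kappa+1)t+\kappa)}I_0\bigl(2\sqrt{\kappa(\kappa+1)t}\bigr)$ in~\eqref{pout}. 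The $N-1$ product factors then follow by pushing the same substitution through the two arguments of $Q_1(\cdot,\cdot)$.

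The main obstacle I anticipate is bookkeeping inside the $Q_1$ arguments, because they mix $\rho_n^2$, $1-\rho_n^2$, $\sigma^2$, and $\kappa$, and it is easy to mis-track a $(\kappa+1)$ factor when clearing $\sigma^2$ from the denominators. Concretely, I would verify term by term that the inner argument $\sqrt{\tfrac{2\rho_n^2 t_1^2}{\sigma^2(1-\rho_n^2)}+2\kappa}$ maps to $\sqrt{\tfrac{2\rho_n^2(\kappa+1)t}{1-\rho_n^2}+2\kappa}$ and that the outer argument $\sqrt{\tfrac{2(1+\kappa)}{\sigma^2(1-\rho_n^2)}}\sqrt{\gamma_{th}}$ reduces to the claimed form in~\eqref{pout}. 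Once the substitution is pushed through consistently, no further analytical work is needed and the theorem follows.
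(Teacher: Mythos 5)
Your proposal is correct and takes essentially the same route as the paper: the paper's entire proof is the one-line substitution of $m_1=\dots=m_N$ (equal to $\sqrt{\gamma_{th}}$, even though the text writes $\gamma_{th}$) into the joint CDF of Lemma~\ref{Lemma2:joint CDF}, and your change of variable $t=t_1^2$ together with $\sigma^2=1/(\kappa+1)$, $A^2=\kappa/(\kappa+1)$ is exactly the bookkeeping that the paper leaves implicit. One caution: if you carry the Jacobian through honestly, the outer measure $\frac{2t_1}{\sigma^2}\,dt_1$ becomes $\frac{1}{\sigma^2}\,dt=(\kappa+1)\,dt$, so a residual factor $(\kappa+1)$ multiplies the integrand; this factor is needed for the $N=1$ case of \eqref{pout} to reduce to $1-Q_1\bigl(\sqrt{2\kappa},\sqrt{2(1+\kappa)\gamma_{th}}\bigr)$ as used in Corollary~\ref{Corollary1:lower bound}, so it should not be absorbed silently into the phrase ``up to the $\sigma^2$ prefactor.''
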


\begin{proof}
The OP is found by substituting \(m_1=\dots=m_N=\gamma_{th}\)  into the joint CDF in~\eqref{jcdf}, which completes the proof.
\end{proof}

\begin{corollary}\label{Corollary1:lower bound}
A lower bound of the OP with Rician fading factor $\kappa$ and the SNR threshold $\gamma_{th}$ can be found as
\begin{equation}\label{out-up}
\begin{aligned}
&{p_{out}}\left( {{\gamma _{th}}} \right)\ge{\hat p_{out}}\left( {{\gamma _{th}}} \right) \\
&= \left[ {1 - {Q_1}\left( {\sqrt { 2\kappa} ,\sqrt {2(1 + \kappa)} \sqrt {{\gamma _{th}}} } \right)} \right]\times\\
&\prod\limits_{n = 2}^N {\left(1-{Q_1}\left( {\sqrt {\frac{{2\rho _n^2(\kappa + 1)\gamma _{th}}}{{(1 - \rho _n^2)}} + 2\kappa} ,\sqrt {\frac{{2(1 + \kappa)}}{{(1 - \rho _n^2)}}} \sqrt {{\gamma _{th}}} } \right)\right)} .
\end{aligned}
\end{equation}
\end{corollary}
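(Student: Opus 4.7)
The plan is to exploit the monotonicity in $t$ of the Marcum Q-function factors appearing inside the integral in (\ref{pout}), which lets one pull the product factor outside and recognize the residual integral as a Rician CDF.

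First I would inspect the integrand in (\ref{pout}). It has the form $e^{-((\kappa+1)t+\kappa)}I_{0}(2\sqrt{\kappa(\kappa+1)t})$ multiplied by $\prod_{n=2}^{N}[1-Q_{1}(a_{n}(t),b_{n})]$, where $a_{n}(t)=\sqrt{2\rho_{n}^{2}(\kappa+1)t/(1-\rho_{n}^{2})+2\kappa}$ is strictly increasing in $t$ while $b_{n}=\sqrt{2(1+\kappa)/(1-\rho_{n}^{2})}\sqrt{\gamma_{th}}$ does not depend on $t$. The key analytical ingredient is the classical monotonicity property that $Q_{1}(a,b)$ is increasing in $a\ge 0$ for every fixed $b\ge 0$; this follows directly from the integral representation $Q_{1}(a,b)=\int_{b}^{\infty}x\,e^{-(x^{2}+a^{2})/2}I_{0}(ax)\,dx$ combined with the fact that $e^{-a^{2}/2}I_{0}(ax)$ is increasing in $a$ for every $x\ge 0$.

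From this monotonicity, each factor $1-Q_{1}(a_{n}(t),b_{n})$ is non-increasing in $t$ on $[0,\gamma_{th}]$, and so is the full product over $n=2,\dots,N$. Since a non-increasing non-negative function on $[0,\gamma_{th}]$ is bounded below on that interval by its value at the right endpoint, I can pull the product (evaluated at $t=\gamma_{th}$) out of the integral to obtain
\begin{equation*}
p_{out}(\gamma_{th})\ge \prod_{n=2}^{N}\!\left[1-Q_{1}\!\left(\sqrt{\tfrac{2\rho_{n}^{2}(\kappa+1)\gamma_{th}}{1-\rho_{n}^{2}}+2\kappa},\sqrt{\tfrac{2(1+\kappa)}{1-\rho_{n}^{2}}}\sqrt{\gamma_{th}}\right)\right]\!\int_{0}^{\gamma_{th}}\! e^{-((\kappa+1)t+\kappa)}I_{0}\!\left(2\sqrt{\kappa(\kappa+1)t}\right)dt,
\end{equation*}
which already has the product structure displayed in (\ref{out-up}).

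To conclude, I would evaluate the residual single integral in closed form. Applying the substitution $t=m^{2}$ (with $dt=2m\,dm$) and using the normalizations $\sigma^{2}=1/(\kappa+1)$ and $A^{2}=\kappa/(\kappa+1)$, the integrand becomes proportional to the Rician PDF in (\ref{rician dis}), so the integral equals the marginal CDF $P(|h_{1}|\le\sqrt{\gamma_{th}})$. Invoking the standard Marcum-Q representation of the Rician CDF then gives $P(|h_{1}|\le\sqrt{\gamma_{th}})=1-Q_{1}(\sqrt{2\kappa},\sqrt{2(1+\kappa)}\sqrt{\gamma_{th}})$, yielding exactly the first factor of (\ref{out-up}). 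The only genuinely delicate step is the monotonicity invocation; once it is in hand, the remainder is a change of variable and a textbook identity.
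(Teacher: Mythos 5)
Your argument is correct and delivers exactly the bound in \eqref{out-up}, but it takes a more direct route than the paper. The paper's Appendix C proceeds by recurrence: it writes the outage reduction $p_{out(N-1)}-p_{out(N)}$ contributed by the $N$-th port, integrates by parts, and uses the monotonicity of $t\mapsto Q_1(a_N(t),b_N)$ to discard a non-negative term, arriving at $p_{out(N)}\ge \left(1-Q_1(a_N(\gamma_{th}),b_N)\right)p_{out(N-1)}$, which it then iterates down to the single-port outage probability. You instead bound the integrand of \eqref{pout} pointwise: each $a_n(t)$ is increasing in $t$ and $Q_1(a,b)$ is non-decreasing in $a$, so every factor $1-Q_1(a_n(t),b_n)$ attains its minimum over $[0,\gamma_{th}]$ at $t=\gamma_{th}$, and the entire product can be pulled outside the integral at that endpoint. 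Both proofs rest on the same key monotonicity property of the Marcum Q-function; yours dispenses with the integration by parts and the induction, is the more transparent of the two, and makes the source of looseness explicit (the gap is governed by how much $Q_1(a_n(t),b_n)$ varies over $[0,\gamma_{th}]$), consistent with the paper's remark that the bound degrades as $\gamma_{th}$ grows.

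Two small repairs are needed. First, your justification of the monotonicity of $Q_1(\cdot,b)$ is not right as stated: $e^{-a^2/2}I_0(ax)$ is \emph{not} increasing in $a$ for every $x\ge 0$ (at $x=0$ it equals $e^{-a^2/2}$, which decreases). The monotonicity itself is standard and follows, e.g., from $\frac{\partial}{\partial a}Q_1(a,b)=a\left[Q_2(a,b)-Q_1(a,b)\right]\ge 0$; cite it rather than rederive it this way. Second, the residual integral as literally written in \eqref{pout} evaluates to $\frac{1}{\kappa+1}\left[1-Q_1\left(\sqrt{2\kappa},\sqrt{2(1+\kappa)}\sqrt{\gamma_{th}}\right)\right]$: the substitution $t=t_1^2$ applied to \eqref{jcdf} should leave a factor $(\kappa+1)$ in front of the exponential, which is missing from \eqref{pout}. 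That is a normalization slip in the paper's stated OP rather than in your reasoning, but ``proportional to the Rician PDF, hence equal to the CDF'' requires that constant to be tracked; with it restored, your identification of the integral as the single-port Rician CDF $1-Q_1\left(\sqrt{2\kappa},\sqrt{2(1+\kappa)}\sqrt{\gamma_{th}}\right)$ gives exactly the first factor of \eqref{out-up}.
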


\begin{proof}
See Appendix C.
\end{proof}

\begin{corollary}\label{outup}
An upper bound of the OP with Rician fading factor $\kappa$ and the SNR threshold $\gamma_{th}$ is given by
\begin{equation}\label{out-low}
\begin{aligned}
{p_{out}}\left( {{\gamma _{th}}} \right)&\le{{\tilde p_{out}}}\left( {{\gamma _{th}}} \right) \\
&= \left[ {1 - {Q_1}\left( {\sqrt {2\kappa} ,\sqrt {2(1 + \kappa)} \sqrt {{\gamma _{th}}} } \right)} \right]\\
&\times\prod\limits_{n = 2}^N {\left( {1 - {\alpha _n}{e^{ - \frac{c }{{1 - \rho _n^2}}{\gamma _\kappa}}}} \right)} ,
\end{aligned}
\end{equation}
where 
\begin{align*}
{\gamma _\kappa} &= {\gamma _{th}}\left( {\kappa + 1} \right) + \kappa - 2\sqrt {{\gamma _{th}}\kappa(\kappa + 1)},\\
{\alpha _n} &= \frac{{\alpha {{\left[ {{\gamma _{th}}\left( {1 + \kappa} \right)} \right]}^{0.25}}}}{{\sqrt {\left| {{\rho _n}} \right|} {{\left[ {{\gamma _{th}}\left( {1 + \kappa} \right)} \right]}^{0.25}} + {{\left[ {\kappa\left( {1 - \rho _n^2} \right)} \right]}^{0.25}}}},\\
\alpha &= \frac{{{e^{\frac{1}{{\left[ {\pi \left( {c  - 1} \right) + 2} \right]}}}}}}{{2c }}\sqrt {\frac{1}{\pi }\left( {c  - 1} \right)\left[ {\pi \left( {c  - 1} \right) + 2} \right]},
\end{align*}
and $c$ is a constant number greater than one.
\end{corollary}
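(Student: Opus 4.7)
The strategy is to start from the exact integral representation of $p_{out}(\gamma_{th})$ in Theorem~\ref{Theorem3:outage probability} and to replace each of the $N-1$ Marcum $Q_1$ factors in the integrand by a one-parameter exponential lower bound, so that the resulting upper bound on the product $\prod_{n=2}^N[1-Q_1(\cdot,\cdot)]$ becomes independent of the integration variable $t$; at that point it factors out of the integral and the remaining one-dimensional integral evaluates to the first factor of $\tilde{p}_{out}$ in closed form.

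Concretely, I would invoke a parametric exponential lower bound on the Marcum $Q_1$-function of the form $Q_1(a,b)\ge \alpha(c)\,e^{-c\phi(a,b)}$, where $c>1$ is a free scalar and $\alpha(c)$ is a decreasing function of $c$ encoding the trade-off between the tightness of the exponent and the size of the prefactor. Such a bound can be produced, for example, from $Q_1(a,b)\ge Q(b-a)$ combined with a tangent-exponential lower bound on the Gaussian $Q$-function of the form $Q(x)\ge \alpha(c)\,e^{-cx^2/2}$, obtained by lower-bounding $e^{-x^2/2}$ with an exponential that is tangent at an optimally chosen base point. Applying this estimate to the $n$-th Marcum factor and using the key algebraic identity $\gamma_\kappa=\bigl(\sqrt{\gamma_{th}(\kappa+1)}-\sqrt{\kappa}\bigr)^2$ to reshape the squared distance, I arrange for the exponent to reduce to exactly $c\gamma_\kappa/(1-\rho_n^2)$ and the prefactor to collapse to the $\alpha_n$ stated in the corollary. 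Since $Q_1(\cdot,b)$ is non-decreasing in its first argument, the worst case over $t\in[0,\gamma_{th}]$ is attained at a boundary, which preserves uniformity of the resulting bound in $t$.

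Once the $t$-independent factor $\prod_{n=2}^N\bigl(1-\alpha_n e^{-c\gamma_\kappa/(1-\rho_n^2)}\bigr)$ has been pulled out of the integral in~\eqref{pout}, the remaining integral $\int_0^{\gamma_{th}}e^{-((\kappa+1)t+\kappa)}I_0(2\sqrt{\kappa(\kappa+1)t})\,dt$ is evaluated via the substitution $u=(\kappa+1)t$ and the classical identity $\int_0^{v}e^{-(u+\kappa)}I_0(2\sqrt{\kappa u})\,du=1-Q_1(\sqrt{2\kappa},\sqrt{2v})$, yielding (up to the PDF normalization consistent with \eqref{pout}) the leading factor $1-Q_1(\sqrt{2\kappa},\sqrt{2(1+\kappa)\gamma_{th}})$ appearing in $\tilde{p}_{out}$. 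Multiplying the two closed-form pieces then produces the stated upper bound.

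The main obstacle is identifying the precise parametric lower bound on $Q_1$ that simultaneously produces the exact exponent $c\gamma_\kappa/(1-\rho_n^2)$ and the specific prefactor $\alpha_n$ featuring the fourth-root expressions $[\gamma_{th}(1+\kappa)]^{1/4}$ and $[\kappa(1-\rho_n^2)]^{1/4}$. These quarter-power terms strongly suggest that the bound does not follow from an elementary $Q(b-a)$ estimate alone but from a saddle-point-style optimization of the base point of the exponential minorant of $e^{-x^2/2}$, with the optimization variable appearing as the free parameter $c>1$. Tracking the admissible range of $c$ and verifying that the chosen minorant yields a bound on $Q_1$ that is uniform over $t\in[0,\gamma_{th}]$—rather than only pointwise at the boundary—forms the technical core of the argument.
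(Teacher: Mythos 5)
Your overall strategy matches the paper's: start from the exact integral in Theorem~\ref{Theorem3:outage probability}, replace each factor $1-Q_1(\cdot,\cdot)$ in the product by a $t$-independent quantity via an exponential lower bound on the Marcum $Q$-function, pull the product out of the integral, and evaluate the remaining integral of the Rician PDF as $1-Q_1(\sqrt{2\kappa},\sqrt{2(1+\kappa)\gamma_{th}})$. The exponent $\tfrac{c}{1-\rho_n^2}\gamma_\kappa$ with $\gamma_\kappa=(\sqrt{\gamma_{th}(\kappa+1)}-\sqrt{\kappa})^2$ and the constant $\alpha$ indeed come, exactly as you guess, from the Chernoff-type lower bound $Q(x)\ge\alpha(c)\,e^{-cx^2/2}$ with free parameter $c>1$.

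The genuine gap is in the prefactor $\alpha_n$. You correctly observe that the quarter-power terms cannot come from the elementary estimate $Q_1(a,b)\ge Q(b-a)$, but your proposed resolution --- a saddle-point optimization of the base point of the exponential minorant of $e^{-x^2/2}$ --- is not where they come from, and would not produce them: that optimization is already exhausted by the single parameter $c$, which appears simultaneously in $\alpha$ and in the exponent, leaving no extra degree of freedom to generate $[\gamma_{th}(1+\kappa)]^{1/4}$ and $[\kappa(1-\rho_n^2)]^{1/4}$. The paper instead uses the asymptotic relation $Q_1(a,b)\approx\sqrt{b/a}\,Q(b-a)$ for $a<b$ (its reference \cite{marcum_appro}); the factor $\sqrt{b/a}$, with $b=\sqrt{\tfrac{2(1+\kappa)}{1-\rho_n^2}}\sqrt{\gamma_{th}}$ and $a=\sqrt{\tfrac{2(1+\kappa)}{1-\rho_n^2}}\sqrt{\rho_n^2\gamma_{th}+A^2(1-\rho_n^2)}$ evaluated at $t=\gamma_{th}$, is precisely a ratio of fourth roots, and loosening the denominator via $(u+v)^{1/4}\le u^{1/4}+v^{1/4}$ yields the stated $\alpha_n$. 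Without this ingredient your chain of inequalities produces a different (and weaker) prefactor, so the claimed formula is not reached. Note also that this step makes the result an approximate rather than strict bound, and your appeal to monotonicity of $Q_1$ in its first argument points the wrong way for uniformity in $t$: a lower bound on $Q_1(a(t),b)$ uniform over $t\in[0,\gamma_{th}]$ is attained at $t=0$, not at $t=\gamma_{th}$ where the target expression is evaluated; the paper does not resolve this either, but your proposal cannot simply assert it.
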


\begin{proof}
See Appendix D.
\end{proof}

\begin{corollary}\label{Corollary3:lower bound in series}
The lower bound of the OP in \eqref{out-up} can be represented in series as:
\begin{equation}\label{out-up-series}
{{\hat p}_{out}} = \prod\limits_{n = 1}^N {\left( {{e^{ - a_n^2/2}}\sum\limits_{k = 0}^\infty  {\frac{1}{{k!}}} \frac{{\gamma \left( {1 + k,\frac{{b_n^2}}{2}} \right)}}{{\Gamma \left( {1 + k} \right)}}{{\left( {\frac{{a_n^2}}{2}} \right)}^k}} \right)} 
\end{equation}
where 
\begin{align*}
{a_n} &= \sqrt {\frac{{2\rho _n^2(\kappa  + 1){\gamma _{th}}}}{{(1 - \rho _n^2)}} + 2\kappa } ,n = 1...N,\\
{b_n} &= \sqrt {\frac{{2(1 + \kappa )}}{{(1 - \rho _n^2)}}} \sqrt {{\gamma _{th}}} ,n = 1...N,
\end{align*}
\end{corollary}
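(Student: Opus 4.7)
The plan is to prove the series identity factorwise: show that for each $n=1,\dots,N$,
\[
1-Q_1(a_n,b_n) \;=\; e^{-a_n^2/2}\sum_{k=0}^{\infty}\frac{1}{k!}\,\frac{\gamma\!\left(1+k,\tfrac{b_n^2}{2}\right)}{\Gamma(1+k)}\left(\frac{a_n^2}{2}\right)^k,
\]
and then take the product over $n$. The first step is to verify that Corollary~1 is already in the form $\prod_{n=1}^N[1-Q_1(a_n,b_n)]$. Using the convention $\rho_1\triangleq 0$ from Lemma~\ref{Lemma1:joint pdf}, the definitions of $a_n,b_n$ in the statement give $a_1=\sqrt{2\kappa}$ and $b_1=\sqrt{2(1+\kappa)\gamma_{th}}$, which exactly match the leading factor in \eqref{out-up}; thus the claimed product is simply a uniform way of writing \eqref{out-up}.

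Next I would derive the per-factor identity from the standard integral representation
\[
Q_1(a,b)=\int_b^{\infty} x\,e^{-(x^2+a^2)/2}\,I_0(ax)\,dx.
\]
Substituting the power series $I_0(ax)=\sum_{k\ge 0}\frac{(ax/2)^{2k}}{(k!)^2}$ and interchanging sum and integral (justified by uniform convergence on compact sets and the Gaussian tail), the integral reduces, via the change of variables $u=x^2/2$, to $2^{k}\,\Gamma(k+1,b^2/2)$, where $\Gamma(\cdot,\cdot)$ is the upper incomplete gamma function. This yields the compact series
\[
Q_1(a,b)=e^{-a^2/2}\sum_{k=0}^{\infty}\frac{(a^2/2)^k}{k!\,\Gamma(k+1)}\,\Gamma\!\left(k+1,\tfrac{b^2}{2}\right).
\]
Finally, applying $\Gamma(k+1,b^2/2)=\Gamma(k+1)-\gamma(k+1,b^2/2)$ and recognizing $e^{-a^2/2}\sum_{k\ge 0}(a^2/2)^k/k!=1$ gives the desired expression for $1-Q_1(a,b)$. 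Setting $(a,b)=(a_n,b_n)$ for $n=1,\dots,N$ and multiplying completes the proof of \eqref{out-up-series}.

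The work is essentially bookkeeping; the only place where care is needed is justifying the interchange of sum and integral and identifying the resulting integral with the incomplete gamma function, but both are routine since the integrand decays at a Gaussian rate. I therefore expect no significant obstacle, and the main choice is whether to derive the series representation of $Q_1$ from scratch (as sketched) or to cite it directly from a table of integrals; the former is preferable here because it makes the appearance of $\gamma(1+k,b_n^2/2)$ transparent and keeps the proof self-contained.
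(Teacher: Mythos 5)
Your proposal is correct and follows essentially the same route as the paper: the paper simply cites the series representation $Q_v(a,b)=1-e^{-a^2/2}\sum_{k\ge 0}\frac{1}{k!}\frac{\gamma(v+k,b^2/2)}{\Gamma(v+k)}\left(\frac{a^2}{2}\right)^k$ from a reference, sets $v=1$, and substitutes it into \eqref{out-up}, whereas you re-derive that identity from the integral representation of $Q_1$ — a correct but inessential elaboration. Your observation that the $\rho_1\triangleq 0$ convention makes the leading factor of \eqref{out-up} coincide with the $n=1$ term of the product is exactly the bookkeeping the paper leaves implicit.
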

and $\rho_1$ is set to $0$.
\begin{proof}
The generalized Marcum Q function of order $v>0$ can be represented using incomplete Gamma function as \cite{marcum_appro}
\begin{equation}\label{q_gamma}
Q_v\left(a,b\right)=1 - {e^{ - {a^2}/2}}\sum\limits_{k = 0}^\infty  {\frac{1}{{k!}}} \frac{{\gamma \left( {v + k,\frac{{{b^2}}}{2}} \right)}}{{\Gamma \left( {v + k} \right)}}{\left( {\frac{{{a^2}}}{2}} \right)^k}.
\end{equation}
The lower bound of OP in \eqref{out-up-series} is found by substituting \eqref{q_gamma} and $v=1$ into \eqref{out-up}, which completes the proof.
\end{proof}

\begin{corollary}\label{Corollary3:lower bound in raylei}
When the Rician fading factor $\kappa=0$, the lower bound of the OP in \eqref{out-up} can be represented as:
\begin{equation}\label{out-up-ray0}
\begin{aligned}
&{{\hat p}_{out}^{Rayleigh}}\left( {{\gamma _{th}}} \right) \\
&= \left( {1 - {e^{ - {\gamma _{th}}}}} \right)\\
 &\times \prod\limits_{n = 2}^N {\left( {1 - {Q_1}\left( {\sqrt {\frac{{2\rho _n^2{\gamma _{th}}}}{{(1 - \rho _n^2)}}} ,\sqrt {\frac{{2{\gamma _{th}}}}{{(1 - \rho _n^2)}}} } \right)} \right)} \\
 &= \left( {1 - {e^{ - {\gamma _{th}}}}} \right)\\
& \times \prod\limits_{n = 2}^N {\left( {{e^{ - {a'}_n^2/2}}\sum\limits_{k = 0}^\infty  {\frac{1}{{k!}}} \frac{{\gamma \left( {v + k,\frac{{{b'}_n^2}}{2}} \right)}}{{\Gamma \left( {v + k} \right)}}{{\left( {\frac{{{a'}_n^2}}{2}} \right)}^k}} \right)} ,
\end{aligned}
\end{equation}
where 
\begin{align*}
{{a'}_n} = \sqrt {\frac{{2\rho _n^2{\gamma _{th}}}}{{(1 - \rho _n^2)}}} ,n = 2...N,\\
{{b'}_n} = \sqrt {\frac{{2{\gamma _{th}}}}{{(1 - \rho _n^2)}}} ,n = 2...N.
\end{align*}
\end{corollary}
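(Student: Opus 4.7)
The strategy is straightforward: specialize the general lower bound in~\eqref{out-up} to the Rayleigh limit $\kappa = 0$ and then rewrite each surviving Marcum $Q_1$ factor using the incomplete-gamma series in~\eqref{q_gamma}. Because \textbf{Corollary~\ref{Corollary1:lower bound}} is already established, there is no new probabilistic content to derive; only two algebraic simplifications are needed, so the plan is purely a substitution argument.

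First, I would set $\kappa = 0$ everywhere in~\eqref{out-up}. The leading bracket becomes $1 - Q_1(0,\sqrt{2\gamma_{th}})$, which I would simplify using the standard identity $Q_1(0,b) = e^{-b^2/2}$. This identity is itself a direct consequence of~\eqref{q_gamma}: when $a = 0$ only the $k=0$ term survives, giving $Q_1(0,b) = 1 - \gamma(1,b^2/2)/\Gamma(1) = e^{-b^2/2}$. Hence the leading factor collapses to $1 - e^{-\gamma_{th}}$, matching the first line of~\eqref{out-up-ray0}. For each $n \geq 2$, setting $\kappa = 0$ reduces the two Marcum arguments to $\sqrt{2\rho_n^2\gamma_{th}/(1-\rho_n^2)}$ and $\sqrt{2\gamma_{th}/(1-\rho_n^2)}$, which are exactly $a'_n$ and $b'_n$ as defined in the corollary; this yields the first equality in~\eqref{out-up-ray0}.

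Second, to obtain the series representation in the second equality, I would apply~\eqref{q_gamma} with $v = 1$ to each factor $1 - Q_1(a'_n, b'_n)$, so that
\[
1 - Q_1(a'_n,b'_n) = e^{-{a'_n}^2/2}\sum_{k=0}^{\infty}\frac{1}{k!}\frac{\gamma\!\left(1+k,\tfrac{{b'_n}^2}{2}\right)}{\Gamma(1+k)}\left(\frac{{a'_n}^2}{2}\right)^{\!k},
\]
and then multiply these factors for $n = 2,\dots,N$, combining with the leading $1 - e^{-\gamma_{th}}$ term. The one thing worth double-checking in the bookkeeping is the convention $\rho_1 \triangleq 0$, which makes $a'_1 = 0$ and $b'_1 = \sqrt{2\gamma_{th}}$ and therefore reproduces $1 - e^{-\gamma_{th}}$ via the very same series formula with only the $k = 0$ term active. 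Since both~\eqref{out-up} and~\eqref{q_gamma} are already available, I do not anticipate any analytic obstacle; the argument is essentially a two-step substitution and therefore the ``hard part'' is no harder than ensuring that the $\kappa \to 0$ limit and the change of variables $a_n \to a'_n$, $b_n \to b'_n$ are applied consistently.
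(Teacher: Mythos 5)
Your proposal is correct and follows essentially the same route as the paper: substitute $\kappa=0$ into \eqref{out-up} and expand each Marcum $Q_1$ factor via the series \eqref{q_gamma} with $v=1$. The only cosmetic difference is that the paper obtains the leading factor $1-e^{-\gamma_{th}}$ by integrating the single-port Rayleigh PDF, whereas you obtain it from the identity $Q_1(0,b)=e^{-b^2/2}$; the two are equivalent.
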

\begin{proof}
When the Rician fading factor \( \kappa = 0 \), the channel transitions to a Rayleigh fading channel. By substituting \( \kappa = 0 \) and $N=1$ into, the PDF of the channel amplitude simplifies to:
\begin{equation}\label{raylei dis}
{p_{\left| {{h_{{n}}}} \right|}^{Rayleigh}}(m) = \frac{{2m}}{{{\sigma ^2}}}{e^{ - \frac{{{m^2}}}{{{\sigma ^2}}}}}.
\end{equation}
Based on~\eqref{raylei dis}, we can obtain the OP when the $\kappa=0$ and $N=1$ as
\begin{equation}\label{pout_Ray}
{{p}_{out\left( 1 \right)}^{Rayleigh}} = \left( {1 - {e^{ - {\gamma _{th}}}}} \right).
\end{equation}
By substituting~\eqref{pout_Ray},~\eqref{q_gamma} and $\kappa=0$ into~\eqref{out-up}, we can obtain the desired result in~\eqref{out-up-ray0}, which completes the proof.
\end{proof}

To gain insights into the system performance, the slope of OP when the number of ports $N$ is high, is worth investigating. We first express the high-$N$ slope as
\begin{equation}\label{slopeN}
{S_\infty } =  - \mathop {\lim }\limits_{N \to \infty } \frac{{\log{p_{out}}\left( {{\gamma _{th}}} \right)}}{{\log N}}.
\end{equation}

\begin{proposition}\label{proposition2: slope}
\emph{Based on \textbf{Corollary~\ref{outup}}, when the number of ports $N$ is high, the slope of OP is given by}
\begin{equation}\label{slope}
{S_\infty } = \infty .
\end{equation}

\begin{proof}

Due to the complexity of the exact expression for OP, direct analysis is challenging. Therefore, we focus on analyzing the slope of the upper bound of the OP. If the slope of the upper bound approaches infinity when the number of ports $N$ is high, the slope of the OP will also tend towards infinity. First, we further simplify the upper bound of the OP in \textbf{Corollary~\ref{outup}} as
\begin{equation}\label{upp_app}
\begin{aligned}
{{\tilde p}_{out}}\left( {{\gamma _{th}}} \right) &= \left[ {1 - {Q_1}\left( {\sqrt {2\kappa } ,\sqrt {2(1 + \kappa )} \sqrt {{\gamma _{th}}} } \right)} \right]\\
& \times \prod\limits_{n = 2}^N {\left( {1 - {\alpha _n}{e^{ - \frac{c}{{1 - \rho _n^2}}{\gamma _\kappa }}}} \right)} \\
 &\le {\left( {1 - \alpha {e^{ - \frac{c}{{1 - {\rho ^2}}}{\gamma _\kappa }}}} \right)^N},
\end{aligned}
\end{equation}
where 
\begin{align*}
 \left| \rho  \right| &  = \max \left\{ {\left| {{\rho _2}} \right|,\left| {{\rho _3}} \right|,...,\left| {{\rho _N}} \right|} \right\}, \\
 \alpha  & = \frac{{\alpha {{\left[ {{\gamma _{th}}\left( {1 + \kappa } \right)} \right]}^{0.25}}}}{{\sqrt {\left| \rho  \right|} {{\left[ {{\gamma _{th}}\left( {1 + \kappa } \right)} \right]}^{0.25}} + {{\left[ {\kappa \left( {1 - {\rho ^2}} \right)} \right]}^{0.25}}}}.
\end{align*}

By substituting \eqref{upp_app} into \eqref{slopeN}, we have
\begin{equation}\label{sp1}
\begin{aligned}
{S_\infty } &>  - \mathop {\lim }\limits_{N \to \infty } \frac{{\log \left( {{{\left( {1 - \alpha {e^{ - \frac{c}{{1 - {\rho ^2}}}{\gamma _\kappa }}}} \right)}^N}} \right)}}{{\log N}}\\
 &=  - \mathop {\lim }\limits_{N \to \infty } \frac{{N\log \left( {1 - \alpha {e^{ - \frac{c}{{1 - {\rho ^2}}}{\gamma _\kappa }}}} \right)}}{{\log N}} = \infty .
\end{aligned}
\end{equation}
Then, the proof is complete.
\end{proof}
\end{proposition}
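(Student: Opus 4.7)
The plan is to upper-bound $p_{out}(\gamma_{th})$ by a quantity that decays at least as fast as $q^N$ for some fixed $q\in(0,1)$, because any such geometric-in-$N$ suppression beats every polynomial in $N$ and therefore forces $S_\infty=\infty$. Concretely, I would start from the closed-form upper bound $\tilde p_{out}(\gamma_{th})$ of \textbf{Corollary~\ref{outup}}, drop the harmless leading factor $[1-Q_1(\sqrt{2\kappa},\sqrt{2(1+\kappa)\gamma_{th}})]\le 1$, and collapse the product over $n=2,\dots,N$ into a pure $N$th power by replacing every correlation $|\rho_n|$ with $|\rho|:=\max_{2\le n\le N}|\rho_n|$. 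The monotonicity that justifies this substitution is that as $|\rho_n|$ grows, $1/(1-\rho_n^2)$ grows, so $e^{-c\gamma_\kappa/(1-\rho_n^2)}$ shrinks and each factor $1-\alpha_n e^{-c\gamma_\kappa/(1-\rho_n^2)}$ enlarges; the product is thus dominated by $\bigl(1-\alpha e^{-c\gamma_\kappa/(1-\rho^2)}\bigr)^N$ with $\alpha$ given by the $\rho_n\!\mapsto\!\rho$ substitution in \textbf{Corollary~\ref{outup}}.

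With that clean bound in hand, I would plug it straight into the definition~\eqref{slopeN} of $S_\infty$. Monotonicity of $-\log(\cdot)$ gives $S_\infty \ge -\lim_{N\to\infty}\frac{N\log\bigl(1-\alpha e^{-c\gamma_\kappa/(1-\rho^2)}\bigr)}{\log N}$. Because $\alpha e^{-c\gamma_\kappa/(1-\rho^2)}$ is strictly between $0$ and $1$, the logarithm is a strictly negative constant, while $N/\log N\to\infty$; the right-hand side is therefore $+\infty$, and the one-sided inequality becomes the claimed equality $S_\infty=\infty$.

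The main obstacle, and the step where I would expect a careful reader to push back, is the implicit treatment of $|\rho|$ as $N$-independent. For a fixed FAS aperture $W\lambda$, equation~\eqref{rou} gives $\rho_2=J_0\!\left(\tfrac{2\pi W}{N-1}\right)\to 1$ as $N\to\infty$, so $1-\rho^2\to 0$, and both $\alpha$ and $e^{-c\gamma_\kappa/(1-\rho^2)}$ in fact vary with $N$. A fully rigorous argument would either show that $1-\alpha e^{-c\gamma_\kappa/(1-\rho^2)}$ stays bounded away from $1$ uniformly in $N$ at the chosen parameters, or carry out an explicit $N$-dependent asymptotic balance between the shrinking base and the growing exponent. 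Because $S_\infty$ is only required to diverge (rather than equal a specific finite rate), even the weak qualitative statement that $\tilde p_{out}$ decays faster than any polynomial in $N$ suffices, so the conclusion survives this subtlety; spelling out that decay is the only delicate step in the whole argument.
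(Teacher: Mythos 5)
Your proposal follows essentially the same route as the paper's own proof: drop the leading factor, majorize the product by $\bigl(1-\alpha e^{-c\gamma_\kappa/(1-\rho^2)}\bigr)^N$ with $|\rho|=\max_n|\rho_n|$, and conclude from $N/\log N\to\infty$. The one point worth adding is that the subtlety you flag in your last paragraph is not resolved in the paper either: the paper likewise treats $|\rho|$ and hence $\alpha$ as $N$-independent constants, even though for a fixed aperture $\rho_2=J_0\bigl(2\pi W/(N-1)\bigr)\to 1$, which drives the base of the power toward $1$ and makes the displayed limit indeterminate as written; your suggested fix (showing a fixed fraction of ports has correlation bounded away from $1$, so that the product still decays geometrically in $N$) is exactly what a fully rigorous version of this shared argument would need.
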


\begin{remark}\label{remarkop}
The result of~\eqref{slope} illustrates that the slope of OP of the Rx-SISO-FAS is infinite for large $N$. For a given SNR, any dimension \(W\lambda\), and any Rician factor \(\kappa\), the FAS can achieve an arbitrarily small OP if the number of ports \(N \to \infty\) and \(\left| \rho_n \right| \ne 1\).
\end{remark}

By analyzing the slope of the OP when the SNR is high, we can obtain the diversity order.

\begin{proposition}\label{proposition1: do}
\emph{Based on \textbf{Corollary~\ref{outup}}, when the SNR is high, the diversity order of the Rx-SISO-FAS model is given by}
\begin{equation}\label{diversity order of N}
{d_{N }} =  - \mathop {\lim }\limits_{\frac{1}{{{\gamma _{th}}}} \to \infty } \frac{{\log {{p}_{out}}}}{{\log \frac{1}{{{\gamma _{th}}}}}} \approx N.
\end{equation}

\begin{proof}
To derive the diversity order of the system, we can use the simplified upper bound of the OP in \eqref{upp_app}. By substituting \eqref{upp_app} into \eqref{diversity order of N}, we have
\begin{equation}
\begin{array}{*{20}{l}}
{d_N} &\ge  - \mathop {\lim }\limits_{\frac{1}{{{\gamma _{th}}}} \to \infty } \frac{{\log {{\tilde p}_{out}}}}{{\log \frac{1}{{{\gamma _{th}}}}}}\\
&\ge  - \mathop {\lim }\limits_{\frac{1}{{{\gamma _{th}}}} \to \infty } \frac{{\log \left( {{{\left( {1 - \alpha {e^{ - \frac{c}{{1 - {\rho ^2}}}{\gamma _\kappa }}}} \right)}^N}} \right)}}{{\log \frac{1}{{{\gamma _{th}}}}}}\\
 &\approx  - \mathop {\lim }\limits_{\frac{1}{{{\gamma _{th}}}} \to \infty } \frac{{N\log \left( {1 - \left( {1 - \frac{{c{\gamma _{th}}}}{{1 - {\rho ^2}}}} \right)} \right)}}{{\log \frac{1}{{{\gamma _{th}}}}}}\\
& \approx  - \mathop {\lim }\limits_{\frac{1}{{{\gamma _{th}}}} \to \infty } \frac{{N\log \left( {{\gamma _{th}}} \right)}}{{\log \frac{1}{{{\gamma _{th}}}}}} = N
\end{array}
\end{equation}
Then, the proof is complete.
\end{proof}
\end{proposition}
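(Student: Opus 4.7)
My plan is to sandwich the outage probability between tractable bounds and then extract the high-SNR slope from the cleanest one. Since the exact $p_{out}$ in \textbf{Theorem~\ref{Theorem3:outage probability}} is an integral involving Marcum $Q$-functions, I would not touch it directly; instead, I would start from the upper bound in \textbf{Corollary~\ref{outup}} and further loosen it by replacing each $\rho_n$ in the product by $|\rho|=\max_{n\ge 2}|\rho_n|$ (as the author does in \eqref{upp_app}). This collapses the heterogeneous product of $N-1$ factors into a pure $N$-th power $(1-\alpha e^{-c\gamma_\kappa/(1-\rho^2)})^N$, making the role of the port count $N$ transparent while preserving the inequality.

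Next I would substitute this simplified upper bound into the definition of $d_N$ in \eqref{diversity order of N}. By monotonicity of $\log$,
\begin{equation*}
d_N \;\ge\; -\lim_{\gamma_{th}\to 0^+}\frac{N\,\log\!\bigl(1-\alpha\,e^{-c\gamma_\kappa/(1-\rho^2)}\bigr)}{\log(1/\gamma_{th})}.
\end{equation*}
The heart of the argument is then a small-$\gamma_{th}$ expansion of the bracket. Expanding $\gamma_\kappa=\gamma_{th}(\kappa+1)+\kappa-2\sqrt{\gamma_{th}\kappa(\kappa+1)}$ and $\alpha_n$ in $\gamma_{th}$, I would show that $1-\alpha e^{-c\gamma_\kappa/(1-\rho^2)}$ behaves like a positive power of $\gamma_{th}$; consequently its logarithm is dominated by $\log\gamma_{th}$, and dividing by $\log(1/\gamma_{th})$ leaves a finite constant multiplied by $N$. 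Following the paper's shortcut of keeping only the leading linear term (effectively $1-\alpha e^{-c\gamma_\kappa/(1-\rho^2)}\approx c\gamma_{th}/(1-\rho^2)$ up to constants), this constant is $1$, giving $d_N\ge N$; combined with the intuition that $N$ correlated diversity branches cannot outperform $N$ independent ones, the bound is also approximately tight, yielding the claimed $d_N\approx N$.

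The main obstacle is justifying the linearization in $\gamma_{th}$ cleanly, because $\gamma_\kappa$ does \emph{not} vanish as $\gamma_{th}\to 0$ when $\kappa>0$ (it tends to the constant $\kappa$), so naive Taylor expansion of the exponential does not directly produce a factor of $\gamma_{th}$. The rescue comes from the prefactor $\alpha_n$, which carries a $[\gamma_{th}(1+\kappa)]^{1/4}$ in its numerator and hence vanishes as $\gamma_{th}\to 0$; this is what forces $1-\alpha e^{-c\gamma_\kappa/(1-\rho^2)}$ to approach $1$ like a power of $\gamma_{th}$ and keeps the log-ratio logarithmically divergent. Tracking these two competing effects (constant exponential times polynomially vanishing amplitude) is precisely the step that produces the `$\approx$' rather than an equality in the statement, and it is where care is required so that the limit is not accidentally zero or infinite.
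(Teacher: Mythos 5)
You follow the paper's own route (substitute the simplified upper bound $(1-\alpha e^{-c\gamma_\kappa/(1-\rho^2)})^N$ from \eqref{upp_app} into \eqref{diversity order of N}), and you correctly identify the difficulty the paper glosses over: for $\kappa>0$, $\gamma_\kappa\to\kappa\neq 0$ as $\gamma_{th}\to 0$, so Taylor-expanding the exponential does not by itself produce a factor of $\gamma_{th}$. The problem is that your proposed rescue runs backwards. If $\alpha_n\sim[\gamma_{th}(1+\kappa)]^{1/4}\to 0$, then each factor $1-\alpha_n e^{-c\gamma_\kappa/(1-\rho_n^2)}$ tends to $1$, so its logarithm tends to $0$, and the ratio $N\log\bigl(1-\alpha e^{-c\gamma_\kappa/(1-\rho^2)}\bigr)/\log(1/\gamma_{th})$ tends to $0$, not to $-N$. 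A quantity that ``approaches $1$ like a power of $\gamma_{th}$'' has a vanishing logarithm; for the log-ratio to diverge logarithmically you need each factor to approach \emph{zero} like $\gamma_{th}$, which is exactly what the vanishing of $\alpha_n$ prevents. Your argument therefore yields only the vacuous bound $d_N\ge 0$ (or $d_N\ge 1$ if the first factor $1-Q_1(\sqrt{2\kappa},\sqrt{2(1+\kappa)\gamma_{th}})=\Theta(\gamma_{th})$ is kept separate), not $d_N\ge N$.

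The repair is to not use the exponential upper bound of Corollary~\ref{outup} for this purpose---its per-port factors are simply not $O(\gamma_{th})$---and instead apply the small-argument behavior of the Marcum $Q$-function to the exact OP in \eqref{pout}: $1-Q_1(a,b)=\int_0^b x\,e^{-(x^2+a^2)/2}I_0(ax)\,dx\sim\tfrac{1}{2}e^{-a^2/2}b^2$ as $b\to 0$. Since $b_n^2=2(1+\kappa)\gamma_{th}/(1-\rho_n^2)$ while $a_n^2$ remains bounded for $t\in[0,\gamma_{th}]$, each of the $N-1$ product factors in \eqref{pout} is $\Theta(\gamma_{th})$ uniformly over the integration range, and the outer integral contributes a further $\Theta(\gamma_{th})$, giving $p_{out}=\Theta(\gamma_{th}^N)$ and hence $d_N=N$. (The paper's own proof shares the soft spot---it silently sets $\alpha\approx 1$ and $\gamma_\kappa\approx\gamma_{th}$---but your closer reading of $\alpha_n$ exposes the issue rather than resolving it.)
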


\begin{remark}\label{remarkdo}
The result of~\eqref{diversity order of N} illustrates that the diversity order of the Rx-SISO-FAS model can be approximated by the number of ports $N$. Additionally, it shows that the outage performance can be enhanced by increasing the number of ports.
\end{remark}

To further investigate the impact of the Rician fading factor on the OP, it is worthwhile estimating the slope of OP when the Rician fading factor is large. Therefore, we first express the large-\(\kappa\) slope as
\begin{equation}\label{slopek}
{\hat S_\infty } =  - \mathop {\lim }\limits_{\kappa \to \infty } \frac{{\log{P_{out}}\left( {{\gamma _{th}}} \right)}}{{\log \kappa}}.
\end{equation}

\begin{proposition}\label{proposition3: slope2}
\emph{Based on \textbf{Corollary~\ref{Corollary1:lower bound}}, when the Rician fading factor is large enough, the slope of OP is given by}
\begin{equation}\label{slope2}
{\hat S_\infty } = 0 .
\end{equation}
\begin{proof}
First, we further simplify the lower bound of the OP in \textbf{Corollary~\ref{Corollary1:lower bound}} as
\begin{equation}\label{oplowe}
\begin{aligned}
&{p_{out}}\left( {{\gamma _{th}}} \right) \ge {{\hat p}_{out}}\left( {{\gamma _{th}}} \right)\\
 &= \left[ {1 - {Q_1}\left( {\sqrt {2\kappa } ,\sqrt {2(1 + \kappa )} \sqrt {{\gamma _{th}}} } \right)} \right]\times\\
 & \prod\limits_{n = 2}^N {\left( {1 - {Q_1}\left( {\sqrt {\frac{{2\rho _n^2(\kappa  + 1){\gamma _{th}}}}{{(1 - \rho _n^2)}} + 2\kappa } ,\sqrt {\frac{{2(1 + \kappa )}}{{(1 - \rho _n^2)}}} \sqrt {{\gamma _{th}}} } \right)} \right)} \\
 &> {\left( {1 - {Q_1}\left( {\sqrt {2\kappa } ,\sqrt {2(1 + \kappa )} \sqrt {{\gamma _{th}}} } \right)} \right)^N}.
\end{aligned}
\end{equation}
In the case of $a<b$, the exponential-type upper bound of Marcum Q-function is
\begin{equation}\label{Qupp}
{Q_1}\left( {a,b} \right) \le {e^{ - \frac{{{{\left( {b - a} \right)}^2}}}{2}}}.
\end{equation}
By substituting \eqref{oplowe} and \eqref{Qupp} into \eqref{slopek}, we have
\begin{equation}
\begin{aligned}
{S_\infty } &<  - \mathop {\lim }\limits_{\kappa  \to \infty } \frac{{\log \left( {{{\left( {1 - {Q_1}\left( {\sqrt {2\kappa } ,\sqrt {2(1 + \kappa )} \sqrt {{\gamma _{th}}} } \right)} \right)}^N}} \right)}}{{\log \kappa }}\\
 &=  - \mathop {\lim }\limits_{\kappa  \to \infty } \frac{{N\log \left( {1 - {Q_1}\left( {\sqrt {2\kappa } ,\sqrt {2(1 + \kappa )} \sqrt {{\gamma _{th}}} } \right)} \right)}}{{\log \kappa }}\\
 &<  - \mathop {\lim }\limits_{\kappa  \to \infty } \frac{{N\log \left( {1 - {e^{ - \frac{{{{\left( {\sqrt {2(1 + \kappa )} \sqrt {{\gamma _{th}}}  - \sqrt {2\kappa } } \right)}^2}}}{2}}}} \right)}}{{\log \kappa }} = 0.
\end{aligned}
\end{equation}
Then, the proof is complete.
\end{proof}
\end{proposition}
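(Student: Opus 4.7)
The plan is to sandwich $\hat S_\infty$ between two matching estimates. Since $p_{out}(\gamma_{th})\le 1$, we have $\log p_{out}\le 0$ and hence $-\log p_{out}/\log\kappa\ge 0$ for $\kappa>1$; letting $\kappa\to\infty$ gives $\hat S_\infty\ge 0$ for free. For the matching upper estimate $\hat S_\infty\le 0$, I would exploit the lower bound $p_{out}\ge\hat p_{out}$ from Corollary~\ref{Corollary1:lower bound}, which yields $-\log p_{out}\le -\log\hat p_{out}$ and reduces the task to showing that $-\log\hat p_{out}/\log\kappa\to 0$.

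Starting from $\hat p_{out}$, I would proceed in three steps. First, I would simplify its product of Marcum-Q factors by arguing that the $n=1$ factor (corresponding to $\rho_1=0$) is the smallest among the $N$ factors, thereby obtaining
\[
\hat p_{out}\;\ge\;\bigl[1-Q_1\bigl(\sqrt{2\kappa},\sqrt{2(1+\kappa)\gamma_{th}}\bigr)\bigr]^{N}.
\]
Second, I would apply the standard exponential upper bound $Q_1(a,b)\le e^{-(b-a)^2/2}$, valid whenever $a\le b$; in the regime $\gamma_{th}\ge 1$ this holds for every $\kappa>0$ because $2(1+\kappa)\gamma_{th}\ge 2\kappa$. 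This upgrades the inequality to $\hat p_{out}\ge\bigl(1-e^{-(\sqrt{2(1+\kappa)\gamma_{th}}-\sqrt{2\kappa})^2/2}\bigr)^N$. Third, I would pass to the limit: as $\kappa\to\infty$, expanding the square root gives $\bigl(\sqrt{2(1+\kappa)\gamma_{th}}-\sqrt{2\kappa}\bigr)^2\sim 2\kappa(\sqrt{\gamma_{th}}-1)^2$, so the exponential decays super-polynomially and $\log(1-e^{-\Theta(\kappa)})=-e^{-\Theta(\kappa)}(1+o(1))$. Multiplying by $N$ and dividing by $\log\kappa$ then yields $0$, closing the sandwich.

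The main obstacle is the first simplification. To replace the full product by the $N$-th power of the $n=1$ factor, I need to verify that $Q_1(a_n,b_n)\ge Q_1\bigl(\sqrt{2\kappa},\sqrt{2(1+\kappa)\gamma_{th}}\bigr)$ for $n\ge 2$, even though both $a_n$ and $b_n$ exceed their $n=1$ counterparts. Since $Q_1$ is increasing in its first argument but decreasing in its second, the joint inequality is not automatic and requires a careful monotonicity analysis along $\rho_n\in(0,1)$; a convenient starting point is the identity $b_n^2-a_n^2=b_1^2-a_1^2$, which tightly constrains how the two arguments vary together. Should direct verification prove awkward, a fallback is to apply the Chernoff-type bound term by term: each of the $N$ factors then contributes at most $O(e^{-\Theta(\kappa)})$ to $-\log\hat p_{out}$, and since $N$ is finite the sum is still $o(\log\kappa)$, which is all that is needed for the conclusion.
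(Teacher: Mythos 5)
Your proposal follows the same overall route as the paper's proof (lower-bound $p_{out}$ by $\hat p_{out}$ from \textbf{Corollary~\ref{Corollary1:lower bound}}, strip the Marcum-Q product down to exponential factors via the Chernoff-type bound $Q_1(a,b)\le e^{-(b-a)^2/2}$, and pass to the limit in \eqref{slopek}), but it is more careful in two respects that are worth keeping. First, you supply the trivial direction $\hat S_\infty\ge 0$ from $p_{out}\le 1$; the paper only establishes the upper estimate, so the sandwich is needed to actually conclude $\hat S_\infty=0$. Second, and more importantly, you correctly identify that the step $\hat p_{out}>\bigl(1-Q_1(\sqrt{2\kappa},\sqrt{2(1+\kappa)\gamma_{th}})\bigr)^N$ in \eqref{oplowe} is not automatic: since $a_n\ge a_1$ and $b_n\ge b_1$ and $Q_1$ is increasing in its first argument and decreasing in its second, the factor-by-factor comparison requires an argument the paper does not give. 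In fact, using $\tfrac{\partial Q_1}{\partial a}=b e^{-(a^2+b^2)/2}I_1(ab)$ and $\tfrac{\partial Q_1}{\partial b}=-b e^{-(a^2+b^2)/2}I_0(ab)$ along the curve $b^2-a^2=\mathrm{const}$ that you identified, one finds $\tfrac{dQ_1}{ds}\propto \tfrac{b}{a}I_1(ab)-I_0(ab)>0$ for large $\kappa$ (where $b-a=\Theta(\sqrt{\kappa})$ and $I_1/I_0\to 1$), so $Q_1(a_n,b_n)>Q_1(a_1,b_1)$ and the paper's inequality actually points the wrong way in precisely the regime of interest. Your fallback — bounding each of the $N$ factors separately by $1-e^{-(b_n-a_n)^2/2}$, noting $(b_n-a_n)^2=(b_1^2-a_1^2)^2/(a_n+b_n)^2=\Theta(\kappa)$ for $\gamma_{th}>1$, and summing the finitely many $o(1)$ contributions to $-\log\hat p_{out}$ — is therefore not just a convenience but the correct way to close the argument; I would promote it to the main line of the proof. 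One residual caveat shared with the paper: the Chernoff bound needs $a_n<b_n$, which for large $\kappa$ forces $\gamma_{th}\ge 1$, and at $\gamma_{th}=1$ exactly one has $(b_n-a_n)^2\to 0$ so the exponential no longer decays; the claimed zero slope is really a statement for $\gamma_{th}>1$, a restriction you partially flag and the paper omits entirely.
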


\begin{remark}\label{remarkopk}
The result of~\eqref{slope2} illustrates that the slope of OP of the Rx-SISO-FAS model is $0$ for large Rician fading factors. This indicates that increasing the ratio of the LoS component is not preferred.
\end{remark}

ER is an important performance metric. The ER of the proposed FAS model can be reduced to the ER of a SISO system, due to the fact that only one channel is used for communication at any time.

\begin{theorem}\label{Theorem2:ER}
When the size of FAS, $W\lambda$, is fixed and setting $\frac{{l\left( d \right)p{\mathbb E}\left( {{{\left| s \right|}^2}} \right)}}{{\sigma _\eta ^2}} = 1$, the ER of the Rx-SISO-FAS model under the Rician fading factor $\kappa$ and the SNR threshold $\gamma_{th}$ can be obtained by
\begin{equation}\label{pe}
\begin{aligned}
{R_{N,\kappa }} &= {\rm E}\left\{ {{{\log }_2}\left( {1 + {\rm SNR}_{\rm FAS}} \right)} \right\}\\
 &=  - \int\limits_0^\infty  {{{\log }_2}\left( {1 + x} \right)} d\left( {1 - F\left( x \right)} \right)\\
 &= \frac{1}{{\ln \left( 2 \right)}}\int\limits_0^\infty  {\frac{{1 - F\left( x \right)}}{{1 + x}}dx},
\end{aligned}
\end{equation}
where 
\begin{equation}\label{cdf_snr}
\begin{aligned}
&F\left( x \right) = \int_0^x {{e^{ - \left( {\left( {\kappa  + 1} \right)t + \kappa } \right)}}{I_0}} \left( {2\sqrt {\kappa (\kappa  + 1)t} } \right)\\
&\times \prod\limits_{n = 2}^N {\left[ {1 - {Q_1}\left( {\sqrt {\frac{{2\rho _n^2(\kappa  + 1)t}}{{(1 - \rho _n^2)}} + 2\kappa } ,\sqrt {\frac{{2(1 + \kappa )}}{{(1 - \rho _n^2)}}} \sqrt x } \right)} \right]} dt
\end{aligned}
\end{equation}
 is the CDF of the resultant SNR of FAS.
\end{theorem}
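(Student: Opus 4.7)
The plan is to recognize that Theorem 2 is the standard ``tail formula'' for the expectation of a positive random variable applied to $\log_2(1 + \text{SNR}_{\text{FAS}})$, combined with the identification of the SNR CDF with the outage probability already derived in Theorem \ref{Theorem3:outage probability}. Under the normalization $\frac{l(d)p\mathbb{E}(|s|^{2})}{\sigma_\eta^{2}} = 1$, the per-port mean power $A^{2}+\sigma^{2}$ equals one, so the instantaneous normalized SNR coincides with $|h_{\text{FAS}}|^{2}=\max_{n}|h_{n}|^{2}$, and the event $\{\text{SNR}_{\text{FAS}}\le x\}$ is precisely the event used to define $p_{\text{out}}(x)$. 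Hence $F(x)=p_{\text{out}}(x)$ inherits the expression in \eqref{pout} by substituting $m_{1}=\cdots=m_{N}=\sqrt{x}$ into the joint CDF of Lemma \ref{Lemma2:joint CDF}, which already delivers the second factor of the theorem.

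Next I would start from the definition $R_{N,\kappa}=\mathbb{E}\{\log_{2}(1+\text{SNR}_{\text{FAS}})\}=\int_{0}^{\infty}\log_{2}(1+x)\,dF(x)$ and perform an integration by parts with $u=\log_{2}(1+x)$ and $dv=dF(x)=-d(1-F(x))$. Choosing the antiderivative $v=-(1-F(x))$ yields
\begin{equation*}
R_{N,\kappa} = -\log_{2}(1+x)\bigl(1-F(x)\bigr)\Big|_{0}^{\infty} + \frac{1}{\ln 2}\int_{0}^{\infty}\frac{1-F(x)}{1+x}\,dx.
\end{equation*}
The boundary term at $x=0$ vanishes because $\log_{2}(1)=0$, and the value at $x=\infty$ vanishes provided $\log_{2}(1+x)\bigl(1-F(x)\bigr)\to 0$, after which one obtains exactly \eqref{pe}.

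The main obstacle will be justifying the decay of the boundary term, since $1-F(x)$ must overcome the logarithmic blow-up of $\log_{2}(1+x)$. I would handle this via a union bound
\begin{equation*}
1-F(x) = P\bigl(\max_{n}|h_{n}|^{2}>x\bigr) \le \sum_{n=1}^{N} P\bigl(|h_{n}|^{2}>x\bigr) = N\,Q_{1}\bigl(\sqrt{2\kappa},\sqrt{2(1+\kappa)x}\bigr),
\end{equation*}
and then invoke the standard exponential-type upper bound on the Marcum $Q$-function (the same bound \eqref{Qupp} already used in Proposition \ref{proposition3: slope2}) to conclude that $1-F(x)$ decays exponentially in $x$, which dominates $\log_{2}(1+x)$ and kills the boundary contribution. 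This exponential tail also guarantees Fubini/integration-by-parts is applicable, completing the derivation.
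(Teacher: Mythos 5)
Your proposal is correct and follows essentially the same route as the paper: identify the normalized SNR CDF $F(x)$ with the outage probability obtained from \textbf{Lemma~\ref{Lemma2:joint CDF}} via the substitution $m_1=\cdots=m_N=\sqrt{x}$, then integrate by parts to reach the tail formula $\frac{1}{\ln 2}\int_0^\infty \frac{1-F(x)}{1+x}\,dx$. Your additional justification of the vanishing boundary term via the union bound and the exponential Marcum-$Q$ bound is a welcome piece of rigor that the paper's two-line proof simply omits.
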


\begin{proof}
According to the definition of SNR in~\eqref{SNR}, by setting $\frac{{l\left( d \right)p{\mathbb E}\left( {{{\left| s \right|}^2}} \right)}}{{\sigma _\eta ^2}} = 1$, the SNR distribution becomes the distribution of the channel gain. By performing a variable substitution in~\eqref{jcdf}, we can obtain the desired result in~\eqref{cdf_snr}, which completes the proof.
\end{proof}

\begin{corollary}\label{erlow}
Setting $\frac{{l\left( d \right)p{\mathbb E}\left( {{{\left| s \right|}^2}} \right)}}{{\sigma _\eta ^2}} = 1$, a lower bound of the ER under the Rician fading factor $\kappa$ and the SNR threshold $\gamma_{th}$ of the Rx-SISO-FAS model can be found in closed form as
\begin{equation}\label{ERclose}
\begin{aligned}
&{\hat R_{N,\kappa }} \\
&= \frac{1}{{\ln \left( 2 \right)}}\int\limits_0^\infty  {\frac{{1 -\hat F\left( x \right)}}{{1 + x}}dx} \\
& \ge \frac{1}{{\ln \left( 2 \right)}}\sum\limits_{n = 2}^N {\sum_{s \subseteq \left\{ {1,\dots,N} \right\}\atop \left| s \right| = n} {{{\left( { - 1} \right)}^{n + 1}}\prod\limits_{i \subseteq s} {{a_i}} } } \int\limits_0^\infty  {\frac{{{e^{ - \sum_{i \subseteq s} {{b_i}x} }}}}{{1 + x}}dx} \\
 &= \frac{1}{{\ln \left( 2 \right)}}\sum\limits_{n = 2}^N {\sum_{s \subseteq \left\{ {1,\dots,N} \right\}\atop \left| s \right| = n} {{{\left( { - 1} \right)}^{n + 1}}\prod\limits_{i \subseteq s} {{a_i}{e^{ - \sum\limits_{i \subseteq s} {{b_i}} }}} } } Ei\left(- {\sum\limits_{i \subseteq s} {{b_i}} } \right),
\end{aligned}
\end{equation}
where
\begin{align*}
 {b_i} &= \frac{c}{{1 - \rho _i^2}}{(\kappa+1)},\\
  {a_i} &= \frac{{{e^{\frac{1}{{\left[ {\pi \left( {c - 1} \right) + 2} \right]}}{-\kappa}}}}}{{2c\sqrt {{\rho _i}} }}\sqrt {\frac{1}{\pi }\left( {c - 1} \right)\left[ {\pi \left( {c - 1} \right) + 2} \right]},
\end{align*}
and $Ei( \cdot )$ is exponential integral.
\end{corollary}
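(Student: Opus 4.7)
The plan is to combine Theorem~\ref{Theorem2:ER}, which expresses the ergodic rate as $\tfrac{1}{\ln 2}\int_0^\infty(1-F(x))/(1+x)\,dx$, with the closed-form upper bound on the outage probability established in \textbf{Corollary~\ref{outup}}, and then to expand the resulting finite product into a signed sum of pure exponentials each of which is integrated in closed form against $1/(1+x)$.

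First I would observe that under the normalisation $l(d)p\mathbb{E}(|s|^2)/\sigma_\eta^2=1$ used in Theorem~\ref{Theorem2:ER}, the CDF $F(x)$ of the received SNR equals $p_{out}(x)$. \textbf{Corollary~\ref{outup}} therefore delivers the pointwise inequality $F(x)\le\tilde p_{out}(x)$, whence
\begin{equation*}
R_{N,\kappa}\ge\frac{1}{\ln 2}\int_0^\infty\frac{1-\tilde p_{out}(x)}{1+x}\,dx\;=\;\frac{1}{\ln 2}\int_0^\infty\frac{1-\hat F(x)}{1+x}\,dx.
\end{equation*}
The leading factor $[1-Q_1(\sqrt{2\kappa},\sqrt{2(1+\kappa)}\sqrt{x})]$ of $\tilde p_{out}$ is then recast in the common envelope form $1-a_1 e^{-b_1 x}$ by specialising the Marcum Q bound from Appendix~D to $\rho_1\triangleq 0$ (so $b_1=c(\kappa+1)$ and $a_1$ is the $\rho\to 0$ limit of $\alpha_n$). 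With all $N$ factors now of the same exponential type, one has $\hat F(x)\le\prod_{n=1}^{N}(1-a_n e^{-b_n x})$ with $b_i$ and $a_i$ as listed below~\eqref{ERclose}.

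Next I would apply the elementary distributive identity
\begin{equation*}
\prod_{n=1}^{N}\bigl(1-a_n e^{-b_n x}\bigr)=\sum_{s\subseteq\{1,\dots,N\}}(-1)^{|s|}\prod_{i\in s}a_i\,e^{-(\sum_{i\in s}b_i)x},
\end{equation*}
so that $1-\prod(\cdot)$ becomes a signed sum indexed by non-empty subsets. Grouping by $n=|s|$ reproduces the double sum appearing in~\eqref{ERclose}; the singleton ($|s|=1$) contributions carry a positive sign and can therefore be dropped without spoiling the direction of the inequality, which is why the outer index in the statement may safely start at $n=2$. Each surviving summand is then integrated against $1/(1+x)$ via the substitution $u=1+x$, giving
\begin{equation*}
\int_0^\infty\frac{e^{-Bx}}{1+x}\,dx\;=\;e^{B}\int_1^\infty\frac{e^{-Bu}}{u}\,du\;=\;-e^{B}\,Ei(-B),
\end{equation*}
valid for every $B=\sum_{i\in s}b_i>0$. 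Inserting this identity into each term of the signed sum and dividing by $\ln 2$ yields exactly the closed form in~\eqref{ERclose}.

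The main technical obstacle is the very first reduction: one must verify that the exponential envelope of Appendix~D survives the $\rho\to 0$ specialisation with constants $a_1,b_1$ consistent with those listed under~\eqref{ERclose}, so that the product $\prod_{n=1}^{N}(1-a_n e^{-b_n x})$ is a \emph{legitimate} pointwise upper bound on $\tilde p_{out}(x)$ for every $x\ge 0$. Once that envelope is in place, the rest of the argument reduces to a finite inclusion–exclusion expansion followed by a single application of the standard exponential-integral formula, both of which are routine.
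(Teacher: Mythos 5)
Your overall route is the same as the paper's (Appendix~E): substitute the closed-form upper bound of the OP/CDF from \textbf{Corollary~\ref{outup}} into the ER integral of \textbf{Theorem~\ref{Theorem2:ER}}, expand the product of factors $\left(1-a_ie^{-b_ix}\right)$ by inclusion--exclusion, and integrate term by term with $\int_0^\infty e^{-\mu x}/(x+1)\,dx=-e^{\mu}\,Ei(-\mu)$. Your explicit justification for starting the outer sum at $|s|=2$ (the singleton terms in $1-\hat F$ are positive, so discarding them preserves the lower bound) is a point the paper leaves unstated, and is welcome.

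However, the step you dismiss as routine --- ``with all $N$ factors now of the same exponential type, one has $\hat F(x)\le\prod_{n=1}^N(1-a_ne^{-b_nx})$'' --- is precisely where the real work (and the real difficulty) lies, and your proposal leaves a genuine gap there. The Appendix~D envelope is $1-\alpha_n e^{-\frac{c}{1-\rho_n^2}\gamma_\kappa}$ with $\gamma_\kappa=(\kappa+1)x+\kappa-2\sqrt{\kappa(\kappa+1)x}$ \emph{not} affine in $x$, and with $\alpha_n$ itself depending on $x$ through the factor $[x(1+\kappa)]^{0.25}$. To reach the pure-exponential form the paper must additionally (i) drop $2\sqrt{\kappa(\kappa+1)x}$ from the exponent and (ii) drop $[\kappa(1-\rho_n^2)]^{0.25}$ from the denominator of $\alpha_n$; step (i) preserves the upper-bound direction on $F$, but step (ii) enlarges $\alpha_n$ and therefore \emph{reverses} it, so the resulting $\hat F$ is an approximation rather than a certified bound --- an issue your asserted inequality inherits without acknowledgement. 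Moreover, your plan to fold the leading factor $1-Q_1(\sqrt{2\kappa},\sqrt{2(1+\kappa)x})$ into the same envelope by ``specialising to $\rho_1=0$'' cannot work with the constants as listed: $a_i\propto 1/\sqrt{\rho_i}$ diverges at $\rho_1=0$ (and the Marcum-Q approximation $Q_1(a,b)\approx\sqrt{b/a}\,Q(b-a)$ degenerates there too), which is why the paper instead keeps the first factor separate as $(1-e^{-x})$. You would need to either treat the $n=1$ factor separately as the paper does, or supply a different envelope valid at $\rho=0$, before the inclusion--exclusion and $Ei$ integration can be applied.
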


\begin{proof}
See Appendix E.
\end{proof}

\begin{corollary}\label{erup}
Setting $\frac{{l\left( d \right)p{\mathbb E}\left( {{{\left| s \right|}^2}} \right)}}{{\sigma _\eta ^2}} = 1$, an upper bound of the ER under the Rician fading factor $\kappa$ and the SNR threshold $\gamma_{th}$ of the Rx-SISO-FAS model can be found as
\begin{equation}\label{ERup}
\begin{aligned}
&{{\tilde R}_{N,\kappa }}\\
& \le \frac{1}{{\ln \left( 2 \right)}}\int\limits_0^\infty  {\frac{1}{{1 + x}}} \\
 &- \frac{1}{{1 + x}}\left[ {1 - {Q_1}\left( {\sqrt {2(\kappa + 1)\sqrt x  + 2\kappa} ,\sqrt {2(1 + \kappa)} \sqrt x } \right)} \right]\times\\
 & \prod\limits_{k = 2}^N \left[1-{{Q_1}\left( {\sqrt {\frac{{2\rho _k^2(\kappa + 1)\sqrt x }}{{(1 - \rho _k^2)}} + 2\kappa} ,\sqrt {\frac{{2(1 + \kappa)}}{{(1 - \rho _k^2)}}} \sqrt x } \right)}\right] dx.
\end{aligned}
\end{equation}
\end{corollary}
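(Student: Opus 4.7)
The plan is to upper-bound $R_{N,\kappa}$ by substituting a pointwise lower bound on the SNR CDF $F(x)$ into the formula of Theorem~\ref{Theorem2:ER}. Since
\begin{equation*}
R_{N,\kappa}=\frac{1}{\ln 2}\int_0^\infty\frac{1-F(x)}{1+x}\,dx,
\end{equation*}
any $F_\ell(x)\le F(x)$ immediately yields an upper bound on $R_{N,\kappa}$, and the expression $\frac{1}{1+x}-\frac{F_\ell(x)}{1+x}$ appearing in~\eqref{ERup} is simply the resulting integrand after that substitution.

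I would construct $F_\ell$ by mirroring the argument that produced Corollary~\ref{Corollary1:lower bound}. Writing $F(x)=\int_0^x f_{|h_1|^2}(t)\,g(t,x)\,dt$, where $g(t,x)=\prod_{n=2}^N[1-Q_1(\cdot,\cdot)]$ is the conditional joint tail probability $P(|h_2|^2\le x,\dots,|h_N|^2\le x\mid |h_1|^2=t)$ read off from Lemma~\ref{Lemma2:joint CDF}, the central observation is that each $1-Q_1$ factor in $g$ is non-increasing in $t$: the first argument of $Q_1$ grows with $t$, and $Q_1(a,b)$ is monotone non-decreasing in $a$. Pivoting at $t^{\star}=\sqrt{x}$ and using $g(t,x)\ge g(\sqrt{x},x)$ on $[0,\sqrt{x}]$ then gives
\begin{equation*}
F(x)\ge g(\sqrt{x},x)\int_0^{\sqrt{x}}\!f_{|h_1|^2}(t)\,dt=g(\sqrt{x},x)\,P(|h_1|^2\le\sqrt{x}),
\end{equation*}
and substituting $t=\sqrt{x}$ into the $Q_1$ arguments inside $g$ reproduces exactly the product $\prod_{k=2}^{N}[\cdot]$ displayed in~\eqref{ERup}.

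The remaining scalar $P(|h_1|^2\le\sqrt{x})$ must then be matched to the leading factor $1-Q_1(\sqrt{2(\kappa+1)\sqrt{x}+2\kappa},\sqrt{2(1+\kappa)x})$ appearing in~\eqref{ERup}. This step is the principal technical obstacle: the replacement simultaneously increases the first argument of $Q_1$ (which decreases $1-Q_1$) and enlarges the second argument (which increases $1-Q_1$), so verifying that the net effect still provides a valid lower bound on the marginal demands a careful Marcum-$Q$ monotonicity argument, or equivalently a re-expression of the marginal CDF of $|h_1|^2$ via an auxiliary correlated variable that absorbs the pivot $\sqrt{x}$ into the noncentrality parameter. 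Once that piece is in place, inserting the resulting $F_\ell$ into the Theorem~\ref{Theorem2:ER} integral and distributing $1/(1+x)$ yields~\eqref{ERup} directly.
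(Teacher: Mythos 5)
Your top-level reduction is exactly the paper's: since $R_{N,\kappa}=\frac{1}{\ln 2}\int_0^\infty\frac{1-F(x)}{1+x}\,dx$, any pointwise lower bound $F_\ell\le F$ yields an upper bound on the ER, and the paper's entire proof of Corollary~\ref{erup} is the one-line observation that the OP lower bound \eqref{out-up} of Corollary~\ref{Corollary1:lower bound}, read as a bound on the SNR CDF with $\gamma_{th}$ replaced by $x$, plays the role of $F_\ell$. Where you diverge is in trying to \emph{re-derive} $F_\ell$ so as to reproduce literally the $\sqrt{x}$ terms inside the noncentrality parameters of \eqref{ERup}, and this is where the gaps lie. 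First, your pivot at $t^\star=\sqrt{x}$ relies on the truncation $\int_0^x f\,g\,dt\ge\int_0^{\sqrt{x}}f\,g\,dt$, which fails whenever $x<1$ (there $\sqrt{x}>x$, so the truncated interval is the larger one); hence the inequality $F(x)\ge g(\sqrt{x},x)\,P\left(|h_1|^2\le\sqrt{x}\right)$ is not established over the whole range of integration. Second, the step you yourself flag as the principal obstacle, namely identifying $P\left(|h_1|^2\le\sqrt{x}\right)$ with the leading factor $1-Q_1\left(\sqrt{2(\kappa+1)\sqrt{x}+2\kappa},\sqrt{2(1+\kappa)}\sqrt{x}\right)$, cannot be closed: the marginal CDF of the first port is $1-Q_1\left(\sqrt{2\kappa},\cdot\right)$, with no $x$-dependence in the first slot, so no Marcum-$Q$ monotonicity argument will produce that factor.

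The way out is to recognize that the $\sqrt{x}$ terms in \eqref{ERup}, including the modified leading factor, are best read as transcription slips: the intended $F_\ell$ is precisely \eqref{out-up} evaluated at $\gamma_{th}=x$, whose leading factor is $1-Q_1\left(\sqrt{2\kappa},\sqrt{2(1+\kappa)}\sqrt{x}\right)$ and whose product terms carry $x$, not $\sqrt{x}$, in the noncentrality. With that reading your argument collapses to the paper's: invoke Corollary~\ref{Corollary1:lower bound} (whose Appendix~C proof already performs the pivot-at-the-upper-limit step using the monotonicity of $Q_1$ in its first argument that you correctly identified), substitute into the ER integral of Theorem~\ref{Theorem2:ER}, and distribute $\frac{1}{1+x}$. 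No new construction of $F_\ell$ and no additional Marcum-$Q$ analysis are needed.
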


\begin{proof}
The upper bound of ER can be obtained by substituting the lower bound of OP in \eqref{out-up} into the expression of ER.
\end{proof}

\begin{figure}[]
\centering
\includegraphics[width =3.5in]{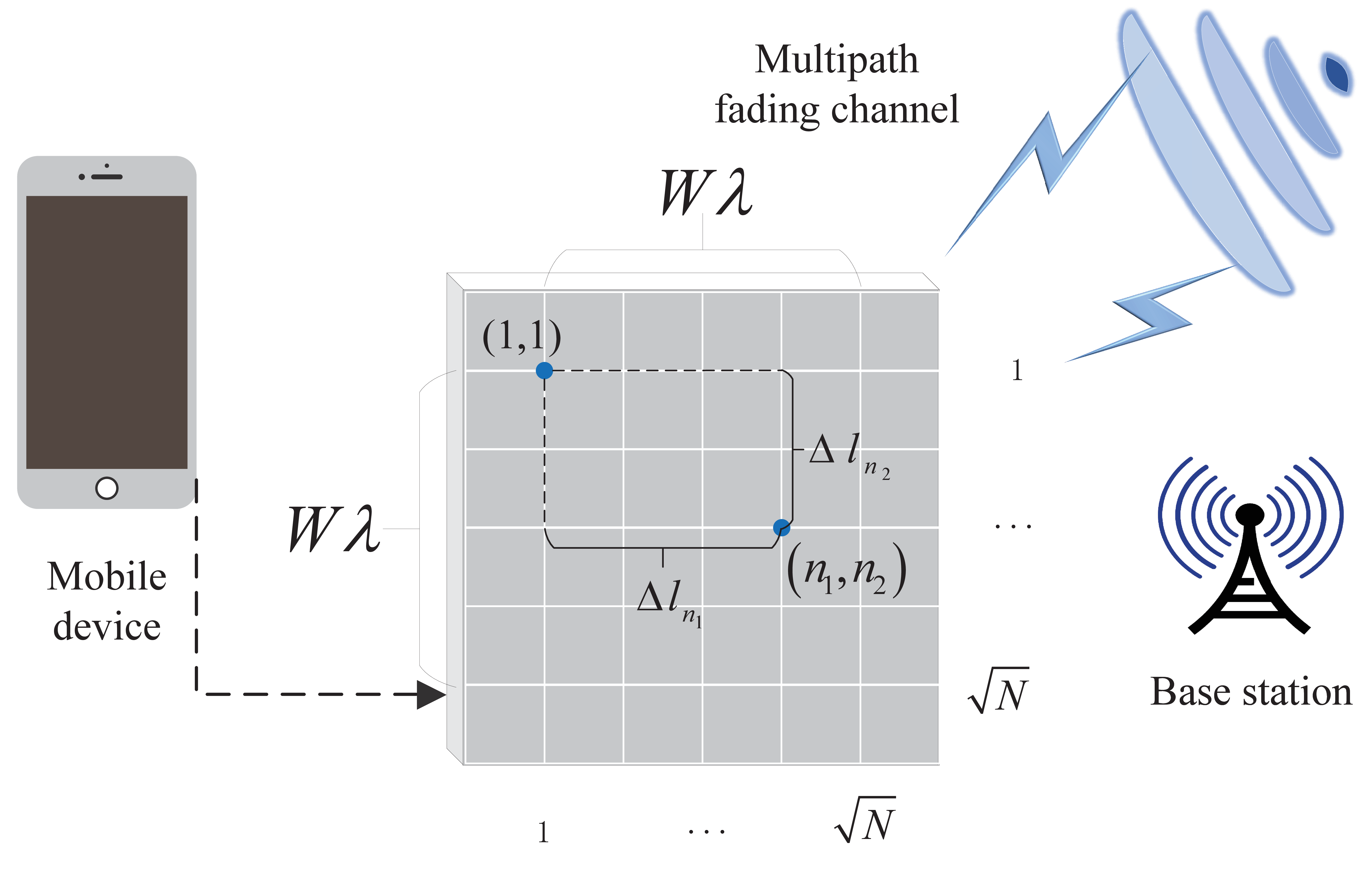}
\caption{The UPA port configuration for FAS.}\label{system2}
\end{figure}

\section{Results for UPA Port Configuration}\label{sec:upa}
In this section, we consider the case when the FAS has a UPA port configuration and extend the analytical results. As shown in Fig.~\ref{system2}, in this case, the FAS has a square planar structure. We assume that the ports are uniformly distributed over the planar surface with a size of \(W\lambda \times W\lambda\). To compare the performance with the ULA case, we set the total number of ports to \(N\), resulting in \(\sqrt{N}\) ports in each row and each column. Using point \((1,1)\) as a position reference, we have the horizontal and vertical distances to point \((n_1,n_2)\) as
\begin{equation}\label{dUPA}
\begin{aligned}
\Delta {l_{{{\rm{n}}_1}}} &= \left(\frac{{{n_1} - 1}}{{\sqrt N - 1}}\right)W\lambda,~\mbox{for }{n_1} = 1,2,\dots,\sqrt N,\\
\Delta {l_{{{\rm{n}}_2}}} &= \left(\frac{{{n_2} - 1}}{{\sqrt N - 1}}\right)W\lambda,~\mbox{for }{n_2} = 1,2,\dots,\sqrt N.
\end{aligned}
\end{equation}
This allows us to calculate the Euclidean distance from point \((n_1,n_2)\) to point \((1,1)\) as
\begin{equation}\label{pl}
\Delta {l_{{{\rm{n}}_1},{n_2}}} = \sqrt {\Delta l_{{n_1}}^2 + \Delta l_{{n_2}}^2}.
\end{equation}
For convenience, we label the ports from left top to right bottom with $\hat n$, resulting in the new expression for the correlation factor, which is given as
 \begin{equation}\label{rouUPA}
\begin{aligned}
{{\rho _ {\hat n}} } = {J_0}\left( {\frac{{2\pi \Delta {l_{{{\rm{n}}_1},{n_2}}}}}{\lambda }} \right),~\mbox{for }{n_1} &= 1,2,\dots,\sqrt N,\\
~{n_2} &= 1,2,\dots,\sqrt N,\\
 \hat n &= {n_1}\sqrt N  + {n_2}.
\end{aligned}
\end{equation}

\begin{remark}
  Based on \eqref{pl} and \eqref{l}, for the same number of ports, the distance between the $\hat n$-th port and the reference port in UPA is longer than that of the $n$-th port and the reference port in ULA. This increased distance leads to a smaller correlation factor in the UPA configuration.
\end{remark}

\begin{theorem}
the OP of the Rx-SISO-FAS model with UPA port configuration under Rician fading channels can be obtained by
\begin{equation}\label{pout_UPA}
\begin{aligned}
&{p_{out}}\left( {{\gamma _{th}}} \right) = \int_0^{{\gamma _{th}}} {{e^{ - \left( {\left( {\kappa + 1} \right)t + \kappa} \right)}}{I_0}} \left( {2\sqrt {\kappa(\kappa + 1)t} } \right)\\
& \prod\limits_{\hat n = 2}^N {\left[ {1 - {Q_1}\left( {\sqrt {\frac{{2\rho _{\hat n}^2(\kappa + 1)t}}{{(1 - \rho _{\hat n}^2)}} + 2\kappa} ,\sqrt {\frac{{2(1 + \kappa)}}{{(1 - \rho _{\hat n}^2)}}} \sqrt {{\gamma _{th}}} } \right)} \right]} dt.
\end{aligned}
\end{equation}
\begin{proof}
The OP expression can be found by substituting~\eqref{rouUPA}  into~\eqref{pout}, thereby completing the proof.
\end{proof}
\end{theorem}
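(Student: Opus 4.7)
The plan is to exploit the observation that the UPA configuration differs from the ULA configuration only through the spatial coordinates of the ports, and these coordinates enter every prior derivation solely via the scalar correlation coefficients. First, I would re-examine the channel parameterization in \eqref{channel model-0}, the joint PDF in Lemma~\ref{Lemma1:joint pdf}, the joint CDF in Lemma~\ref{Lemma2:joint CDF}, and the integral expression in \textbf{Theorem~\ref{Theorem3:outage probability}}, checking that none of these steps invokes the one-dimensional layout explicitly. In each case the geometry appears only as an opaque parameter $\rho_n$ describing the correlation between the $n$-th port and the reference port $n=1$; the displacement formula \eqref{l} is never used after \eqref{rou} has been established.

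Given this, the proof reduces to three routine steps. First, I would adopt the single-index labeling $\hat n = n_1\sqrt N + n_2$ introduced in \eqref{rouUPA} so that $\hat n$ enumerates all $N$ planar ports. Second, I would substitute $\rho_n \mapsto \rho_{\hat n}$ (with $\rho_{\hat n}$ computed from the Euclidean distance \eqref{pl} via \eqref{rouUPA}) throughout the parameterization \eqref{channel model-0}; because that parameterization only weights the shared Gaussian components $(x_0,y_0)$ by the scalar correlation regardless of its geometric provenance, the joint PDF and joint CDF derived in Lemmas~\ref{Lemma1:joint pdf} and~\ref{Lemma2:joint CDF} retain their exact form under the relabeling. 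Third, I would specialize the resulting joint CDF at $m_1 = m_2 = \cdots = m_N = \sqrt{\gamma_{th}}$ along the lines of the proof of \textbf{Theorem~\ref{Theorem3:outage probability}} to arrive at \eqref{pout_UPA}.

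The main obstacle, such as it is, is conceptual rather than computational: one should verify that the reference-port correlation model of \eqref{channel model-0}, which ties every non-reference port to port $1$ through a single shared Gaussian pair, remains a sound description in two dimensions. Strictly speaking, a fully general planar array would admit an $N\times N$ correlation matrix with arbitrary off-diagonal entries $\rho_{\hat n,\hat m}$ rather than the rank-one-like dependence used here. Once one accepts the same reference-port parameterization (consistent with the modeling philosophy adopted throughout the paper) for the UPA case, the derivation becomes a verbatim repetition of the ULA argument with the single substitution $\rho_n \mapsto \rho_{\hat n}$, and equation \eqref{pout_UPA} follows directly from \eqref{pout}.
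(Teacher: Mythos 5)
Your proposal is correct and follows essentially the same route as the paper, which likewise obtains \eqref{pout_UPA} by substituting the UPA correlation coefficients \eqref{rouUPA} into the ULA outage expression \eqref{pout}; your additional verification that the geometry enters only through the scalar $\rho_{\hat n}$, and your caveat about the reference-port (rank-one-like) correlation model in two dimensions, are sound observations but do not change the argument.
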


\begin{corollary}
A lower bound of the OP under the SNR threshold $\gamma_{th}$ of the Rx-SISO-FAS model with UPA port configuration in Rician fading channels is obtained as
\begin{equation}\label{OPUPA1}
\begin{aligned}
&{\hat p_{out}}\left( {{\gamma _{th}}} \right) \\
&= \left[ {1 -  {Q_1}\left( {\sqrt { 2\kappa} ,\sqrt {2(1 + \kappa)} \sqrt {{\gamma _{th}}} } \right)} \right]\times\\
&\prod\limits_{\hat n = 2}^N {\left(1-{Q_1}\left( {\sqrt {\frac{{2\rho _{\hat n}^2(\kappa + 1)\gamma _{th}}}{{(1 - \rho _{\hat n}^2)}} + 2\kappa} ,\sqrt {\frac{{2(1 + \kappa)}}{{(1 - \rho _{\hat n}^2)}}} \sqrt {{\gamma _{th}}} } \right)\right)} .
\end{aligned}
\end{equation}
\end{corollary}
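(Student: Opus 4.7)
The plan is to follow exactly the strategy used for the ULA case in Corollary~\ref{Corollary1:lower bound} (whose derivation is sketched in Appendix~C), since the UPA outage expression~\eqref{pout_UPA} has the same functional form as the ULA expression~\eqref{pout}, with the ULA coefficients $\rho_n$ replaced by the UPA coefficients $\rho_{\hat n}$ defined in~\eqref{rouUPA}. In other words, the structural derivation is agnostic to which correlation model drives the port statistics, so I would recycle the ULA proof and carry the substitution $\rho_n \mapsto \rho_{\hat n}$ through the calculation.

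First I would isolate the product $\prod_{\hat n=2}^{N}\bigl[1 - Q_1(a_{\hat n}(t), b_{\hat n})\bigr]$ inside the integrand of~\eqref{pout_UPA}, where $a_{\hat n}(t) = \sqrt{2\rho_{\hat n}^2(\kappa+1)t/(1-\rho_{\hat n}^2) + 2\kappa}$ is increasing in $t$ and $b_{\hat n}=\sqrt{2(1+\kappa)/(1-\rho_{\hat n}^2)}\sqrt{\gamma_{th}}$ is independent of $t$. Using the monotonicity property that $Q_1(a,b)$ is non-decreasing in its first argument (so $1-Q_1(a,b)$ is non-increasing in $a$), I would conclude that the whole product is a non-increasing function of $t$ on the integration interval $[0,\gamma_{th}]$. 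This allows me to lower-bound the product uniformly by its value at $t=\gamma_{th}$ and pull it outside the integral.

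Next I would evaluate what remains, namely $\int_0^{\gamma_{th}} e^{-((\kappa+1)t+\kappa)} I_0(2\sqrt{\kappa(\kappa+1)t})\,dt$, which is recognized as the CDF at $\gamma_{th}$ of the normalized noncentral chi-squared variable associated with the reference port $|h_1|^2$; this integrates in closed form to $1 - Q_1(\sqrt{2\kappa},\sqrt{2(1+\kappa)\gamma_{th}})$, yielding the first factor of the right-hand side of~\eqref{OPUPA1}. Combining this with the pulled-out product delivers the stated lower bound.

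The hard part is essentially bookkeeping: I need to verify that the monotonicity of $Q_1(a,b)$ in $a$ goes in the correct direction so that freezing $t=\gamma_{th}$ inside the product produces a lower, not upper, bound on the integrand, and then ensure that the UPA correlation coefficients $\rho_{\hat n}$ satisfy $|\rho_{\hat n}|<1$ so that all denominators $1-\rho_{\hat n}^2$ remain positive and the Marcum arguments are well defined. Once these checks are in place, the argument is a direct transcription of the ULA proof and no new analytical difficulty arises.
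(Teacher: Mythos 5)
Your proposal is correct, and its top-level framing — reduce the UPA case to the ULA case by carrying the substitution $\rho_n \mapsto \rho_{\hat n}$ from \eqref{rouUPA} through the ULA argument — is exactly what the paper does: its proof of this corollary is literally one line, "substitute \eqref{rouUPA} into \eqref{out-up}." Where you diverge is in how you justify the underlying ULA bound itself. The paper's Appendix~C proceeds by a recurrence on the number of ports: it writes the outage reduction $\Delta p_{out} = p_{out(N-1)} - p_{out(N)}$, applies integration by parts together with the monotonicity of the Marcum Q-function to bound that increment, and then telescopes to obtain the product form. You instead give a direct argument: since $Q_1(a,b)$ is nondecreasing in $a$ and each $a_{\hat n}(t)$ is increasing in $t$, the product $\prod_{\hat n=2}^{N}\bigl[1-Q_1(a_{\hat n}(t),b_{\hat n})\bigr]$ is nonincreasing on $[0,\gamma_{th}]$, so it can be replaced by its value at $t=\gamma_{th}$ and pulled outside the (nonnegative) integrand, leaving $\int_0^{\gamma_{th}} e^{-((\kappa+1)t+\kappa)}I_0\bigl(2\sqrt{\kappa(\kappa+1)t}\bigr)\,dt = 1-Q_1\bigl(\sqrt{2\kappa},\sqrt{2(1+\kappa)}\sqrt{\gamma_{th}}\bigr)$. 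Both routes land on the same bound; yours is shorter, avoids the integration-by-parts step entirely, and makes the source of the looseness (freezing the product at the endpoint $t=\gamma_{th}$) more transparent, which is consistent with the paper's own remark that the gap widens as $\gamma_{th}$ grows. The checks you flag — the direction of the monotonicity and $|\rho_{\hat n}|<1$ so that the denominators $1-\rho_{\hat n}^2$ stay positive — are indeed the only places the argument could break, and both hold here.
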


\begin{proof}
The OP expression can be found by substituting~\eqref{rouUPA}  into~\eqref{out-up}, thereby completing the proof.
\end{proof}

\begin{corollary}
An upper bound of the OP under Rician fading factor $\kappa$ and the SNR threshold $\gamma_{th}$ of the Rx-SISO-FAS model with UPA port configuration is found as
\begin{equation}\label{pe}
\begin{aligned}
{{\tilde p_{out}}}\left( {{\gamma _{th}}} \right) &= \left[ {1 - {Q_1}\left( {\sqrt { 2\kappa} ,\sqrt {2(1 + \kappa)} \sqrt {{\gamma _{th}}} } \right)} \right]\\
&\times \prod \limits_{\hat n = 2}^N {\left( {1 - {\alpha _{\hat n}}{e^{ - \frac{c }{{1 - \rho _{\hat n}^2}}{\gamma _\kappa}}}} \right)} ,
\end{aligned}
\end{equation}
where 
\begin{align*}
{\alpha _{\hat n}} &= \frac{{\alpha {{\left[ {{\gamma _{th}}\left( {1 + \kappa} \right)} \right]}^{0.25}}}}{{\sqrt {\left| {{\rho _{\hat n}}} \right|} {{\left[ {{\gamma _{th}}\left( {1 + \kappa} \right)} \right]}^{0.25}} + {{\left[ {\kappa\left( {1 - \rho _{\hat n}^2} \right)} \right]}^{0.25}}}}.
\end{align*}
\end{corollary}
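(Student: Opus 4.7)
The plan is to reduce this UPA upper bound to the corresponding ULA upper bound in Corollary~\ref{outup} (equation~\eqref{out-low}) by a direct substitution of the correlation factors. The key observation is that the ULA proof in Appendix~D nowhere uses the one-dimensional geometry that generated the ULA correlations $\rho_n$; it only uses the fact that the joint CDF in Lemma~\ref{Lemma2:joint CDF} has a product-of-Marcum-$Q_1$ structure in which each factor depends on the pair $(\rho_n,\gamma_{th},\kappa)$. Since the UPA joint CDF has the identical product form with $\rho_n$ replaced by the UPA correlation $\rho_{\hat n}$ defined in~\eqref{rouUPA}, the bounding argument transfers verbatim.

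Concretely, I would proceed as follows. First I would instantiate the joint CDF of $|h_1|,\ldots,|h_N|$ for the UPA geometry by repeating the derivation in Appendix~B with the new port displacements from~\eqref{dUPA}--\eqref{pl}; only the correlation values change, so the outcome is~\eqref{jcdf} with $\rho_n \mapsto \rho_{\hat n}$. Next, evaluating the joint CDF at $m_1=\cdots=m_N=\sqrt{\gamma_{th}}$ (as in the proof of Theorem~\ref{Theorem3:outage probability}) yields the exact OP expression~\eqref{pout_UPA}. Finally, I would apply to each factor $\bigl[1-Q_1(\cdot,\cdot)\bigr]$ in the product over $\hat n\ge 2$ the same exponential-type upper bound on the Marcum $Q_1$ that was used in Appendix~D to pass from~\eqref{pout} to~\eqref{out-low}. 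This produces, factor by factor, the bound $1-\alpha_{\hat n}e^{-\frac{c}{1-\rho_{\hat n}^2}\gamma_\kappa}$ with $\alpha_{\hat n}$ as stated, while the single $\hat n=1$ (reference-port) factor is left untouched and contributes $[1-Q_1(\sqrt{2\kappa},\sqrt{2(1+\kappa)}\sqrt{\gamma_{th}})]$.

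The main obstacle is essentially bookkeeping rather than a new analytical step: one must verify that the Marcum $Q_1$ tail bound used in Appendix~D is valid in the parameter regime induced by $\rho_{\hat n}$, i.e., that the monotonicity/positivity requirements on the arguments $\sqrt{2\rho_{\hat n}^2(\kappa+1)\gamma_{th}/(1-\rho_{\hat n}^2)+2\kappa}$ and $\sqrt{2(1+\kappa)/(1-\rho_{\hat n}^2)}\sqrt{\gamma_{th}}$ continue to hold. Since the Remark immediately preceding the theorem observes that UPA correlations satisfy $|\rho_{\hat n}|\le|\rho_n|$ for the corresponding ULA pair (as UPA distances are larger), the arguments remain in the same admissible range, so the bound applies unchanged. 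No additional estimates or new identities are required, which is why the proof collapses to the substitution~\eqref{rouUPA}$\to$\eqref{out-low}.
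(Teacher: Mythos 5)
Your proposal is correct and matches the paper's own argument, which likewise proves this corollary by substituting the UPA correlation factors from~\eqref{rouUPA} into the ULA upper bound~\eqref{out-low}, since the Appendix~D bounding of each Marcum-$Q_1$ factor depends only on the value of the correlation coefficient and not on the port geometry. Your additional check that the Marcum-$Q$ tail bound remains valid for the (smaller) UPA correlations is a reasonable piece of diligence that the paper leaves implicit.
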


\begin{proof}
The OP expression can be found by substituting~\eqref{rouUPA}  into~\eqref{out-low}, thereby completing the proof.
\end{proof}

\begin{corollary}\label{erUPA}
Setting $\frac{{l\left( d \right)p{\mathbb E}\left( {{{\left| s \right|}^2}} \right)}}{{\sigma _\eta ^2}} = 1$, the lower bound of the ER under the Rician fading factor $\kappa$ and the SNR threshold $\gamma_{th}$ of the Rx-SISO-FAS model can be found in closed form as
\begin{equation}\label{pe}
\begin{aligned}
{\hat R_{N,\kappa }}  &= \frac{1}{{\ln \left( 2 \right)}}\\
&\times\sum\limits_{\hat n = 2}^N {\sum_{s \subseteq \left\{ {1,2,\dots,N} \right\}\atop \left| s \right| = \hat n} {{{\left( { - 1} \right)}^{n + 1}}\prod\limits_{i \subseteq s} {{a_i}{e^{ - \sum\limits_{i \subseteq s} {{b_i}} }}} } } Ei\left( {\sum\limits_{i \subseteq s} {{b_i}} } \right),
\end{aligned}
\end{equation}
\end{corollary}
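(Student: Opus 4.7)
The plan is to reuse the machinery already developed for Corollary \ref{erlow}, since the UPA result differs from its ULA predecessor only through the substitution of the correlation factor $\rho_n$ in \eqref{rou} by the UPA correlation $\rho_{\hat n}$ defined in \eqref{rouUPA}. I would first invoke the ER representation of Theorem \ref{Theorem2:ER}, namely $R_{N,\kappa}=\frac{1}{\ln 2}\int_0^{\infty}\frac{1-F(x)}{1+x}dx$, and observe that any upper bound on the CDF $F(x)$ of the post-selection SNR immediately translates into a lower bound on $R_{N,\kappa}$.

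Second, I would take the OP upper bound that is the UPA counterpart of Corollary \ref{outup} -- which itself is obtained by substituting $\rho_{\hat n}$ into \eqref{out-low} -- and use it as the required upper bound on $F(x)$. After dropping the leading factor $1-Q_1(\sqrt{2\kappa},\sqrt{2(1+\kappa)x})$ (i.e., upper-bounding it by $1$, exactly as was done for the ULA case in Corollary \ref{erlow}), the task reduces to expanding a finite product of the form $\prod_{\hat n=2}^{N}\bigl(1-\alpha_{\hat n}e^{-b_{\hat n}x}\bigr)$. Applying the standard inclusion--exclusion identity $\prod_i(1-y_i)=\sum_{s}(-1)^{|s|}\prod_{i\in s}y_i$ and grouping the subsets by cardinality $|s|=\hat n$ produces a double sum whose inner integrals all have the canonical form $\int_0^{\infty}\frac{e^{-(\sum_{i\in s}b_i)x}}{1+x}dx$, which, after the change of variable $u=1+x$, evaluates in closed form to $-e^{\sum_{i\in s}b_i}Ei\bigl(-\sum_{i\in s}b_i\bigr)$.

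Finally, reassembling the pieces produces precisely the announced expression, with the coefficients $a_i$ and $b_i$ now inheriting their $\rho_{\hat n}$-dependence from \eqref{rouUPA}. The main obstacle is essentially bookkeeping: one must relabel the two-dimensional port indices $(n_1,n_2)$ via $\hat n = n_1\sqrt{N}+n_2$, verify that the sign convention $(-1)^{\hat n+1}$ and the subset-cardinality bounds $\hat n\ge 2$ match those of the statement, and remember that the $[1-Q_1(\cdot)]$ factor has been upper-bounded away so that the $\hat n=1$ subset is absent. No new analytical tool beyond those used in Corollary \ref{erlow} is required; however, because the Euclidean distances in the UPA layout strictly exceed the corresponding linear distances in the ULA arrangement at equal $N$ (cf.~\eqref{pl}), the resulting $\rho_{\hat n}$ are smaller and the bound is genuinely different from, rather than a trivial restatement of, Corollary \ref{erlow}.
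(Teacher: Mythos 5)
Your proposal is correct and takes essentially the same route as the paper, whose proof of Corollary~\ref{erUPA} consists precisely of substituting the UPA correlation coefficients \eqref{rouUPA} into the ULA lower bound \eqref{ERclose}; your additional detail simply unpacks that substitution by re-running the Appendix~E argument (CDF upper bound, inclusion--exclusion expansion, and the integral identity for $\int_0^\infty e^{-\mu x}/(1+x)\,dx$) with $\rho_{\hat n}$ in place of $\rho_n$. One remark: the integral you evaluate gives $-e^{\mu}\,Ei(-\mu)$ with $\mu=\sum_{i\in s}b_i$, which is consistent with the sign convention appearing in Corollary~\ref{erlow}, so the positive argument of $Ei(\cdot)$ written in the statement of Corollary~\ref{erUPA} appears to be a typographical slip in the paper rather than a defect of your derivation.
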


\begin{proof}
The ER expression can be found by substituting~\eqref{rouUPA}  into~\eqref{ERclose}, thereby completing the proof.
\end{proof}

\begin{figure}[t!]
\centering
\includegraphics[width =3.5in]{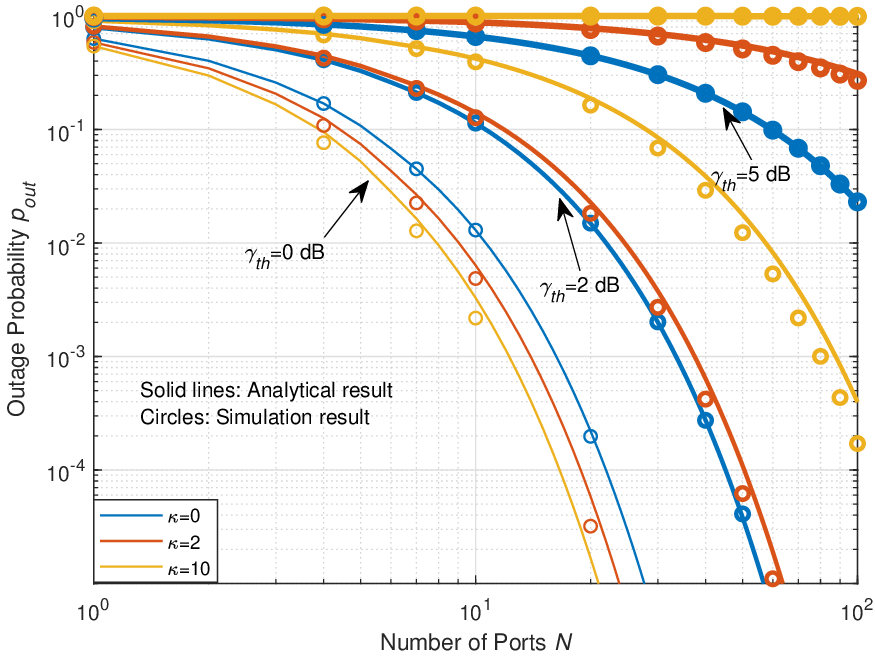}
\caption{The OP against the number of ports $N$ under different Rician fading factor $\kappa$ and SNR threshold $\gamma_{th}$, where the analytical results are derived from \eqref{pout}.}\label{opn}
\end{figure}

\section{Numerical Results}\label{sec:results}
Here we provide numerical results to study the performance of the Rx-SISO-FAS in Rician fading channels considering both ULA and UPA port configurations. The accuracy of our analytical results is verified through Monte Carlo simulations. In the simulations, unless otherwise specified, the size of FAS is set to \(W\lambda=2\lambda\). The SNR threshold $\gamma_{th}$ is expressed in dB, while all other parameters are given in linear values. We focus on the results of OP and ER. The number of Monte Carlo simulations used in this paper is $ 1 \times 10^6 $ for OP and $ 1 \times 10^2 $ for ER.

\begin{figure}[t!]
\centering
\includegraphics[width =3.5in]{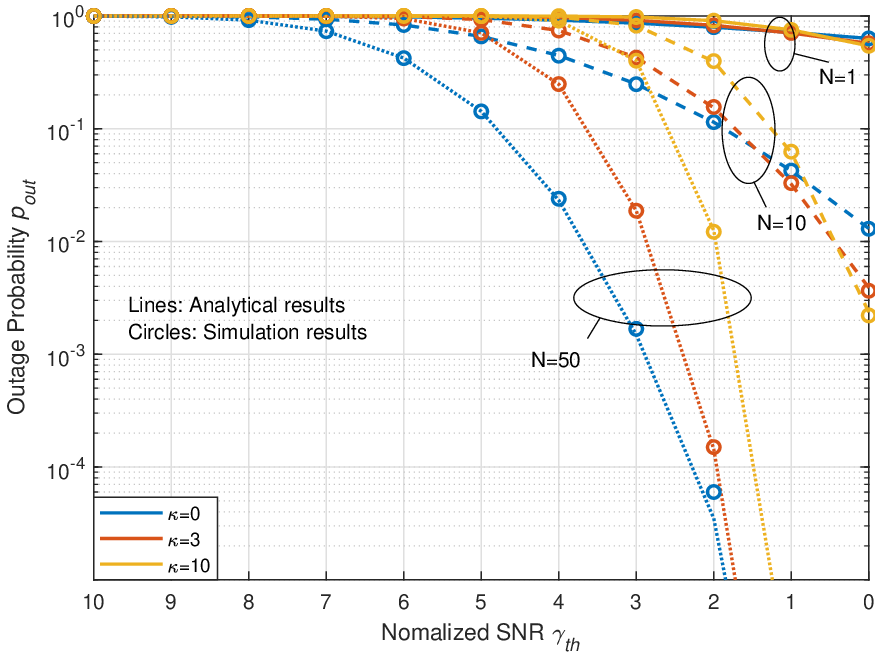}
\caption{The OP against the SNR threshold $\gamma_{th}$ under different Rician fading factor $\kappa$ and number of ports $N$.}\label{opsnr}
\end{figure}

\subsection{Impact of the Number of Ports on OP}
In Fig.~\ref{opn}, we analyze the OP performance as a function of the number of ports \(N\) under different Rician fading factors \(\kappa\) and SNR threshold $\gamma_{th}$. The solid lines and circles represent the analytical results and Monte Carlo simulations, respectively. It is clear that as the number of ports \(N\) increases, the OP continuously decreases. This is because that as more ports are available, the received signal power can be significantly increased as a benefit of the increased diversity order. As we can see, the slope of the curves increases with the number of ports $N$, which validates our \textbf{Remark~\ref{remarkop}}. However, it is noteworthy that in the high SNR regime, an increase in the Rician fading factor \(\kappa\) enhances the overall performance. Conversely, in the low SNR regime, the OP with a larger Rician fading factor $\kappa=10$ significantly deteriorates. As the Rician fading factor $\kappa$ increases, the LoS component becomes more significant. Unlike conventional wireless networks, where a strong LoS component is expected, in the Rx-SISO-FAS, strong LoS impacts negatively on the performance in the low SNR regime.

\subsection{Impact of SNR Threshold on OP} 
We turn our attention to the impact of the SNR threshold on the OP performance using the results in Fig.~\ref{opsnr}. The results illustrate that as expected, if the SNR threshold $\gamma_{th}$ increases (to the left direction), the OP increases quickly. Additionally, it can be seen that as the SNR threshold $\gamma_{th}$ decreases (to the right direction), the slope of OP gets steeper, which verifies the analysis of \textbf{Remark~\ref{remarkdo}}. It is also observed that for FAS with more ports, the slope of the curve is steeper, but the slope could be flattened in FAS with fewer ports. This indicates that increasing the transmission power can significantly enhance the outage performance of FAS but only with enough ports. In addition, a smaller Rician fading factor $\kappa$ exhibits a lower OP, especially in the low SNR regime. 

\begin{figure}[t!]
\centering
\includegraphics[width =3.5in]{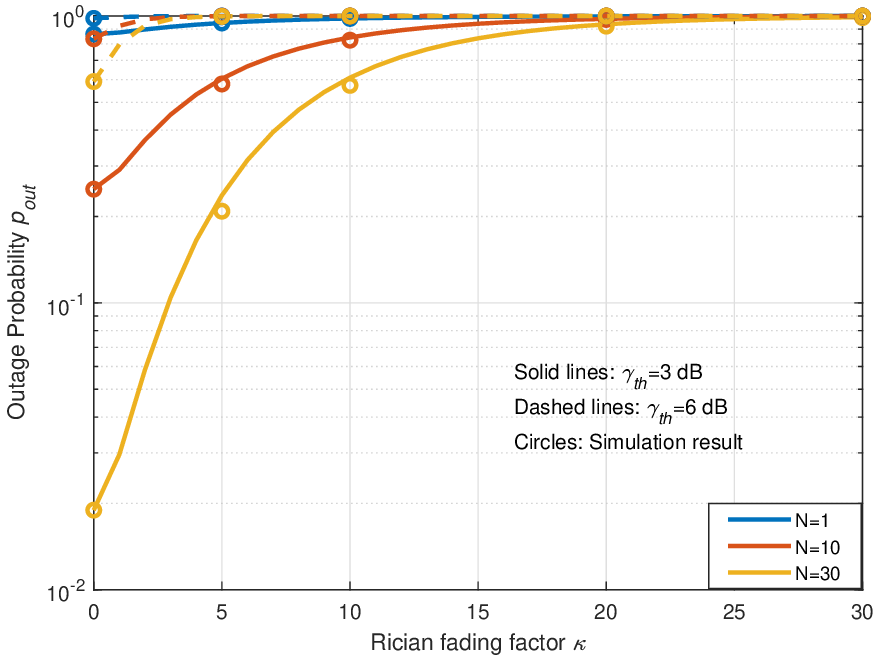}
\caption{The OP against the Rician fading factor $\kappa$ under different number of ports $N$ when $\gamma_{th}$ = 3 dB and $\gamma_{th}$ = 6 dB.}\label{opk}
\end{figure}

\subsection{Impact of Rician Factor on OP} 
Following the results in Figs.~\ref{opn} and \ref{opsnr}, we further analyze the influence of the Rician fading factor \(\kappa\) on the OP performance in Fig.~\ref{opk}. In this figure, the OP is shown against the Rician fading factor \(\kappa\) for different numbers of ports \(N\) and SNR threshold \(\gamma_{th}\). It is observed that the slope of the OP curve drops to zero as the Rician fading factor \(\kappa\) increases, verifying the analysis in \textbf{Remark~\ref{remarkopk}}. Equally, the OP rapidly deteriorates to nearly 1 with an increasing Rician fading factor \(\kappa\). This further illustrates that in the low SNR regime, a strong LoS component could actually degrade system performance. Additionally, it can be observed that increasing the number of ports \(N\) significantly improves the outage performance when \(\kappa\) is small. However, as the LoS component becomes stronger, the effectiveness of adding more ports diminishes.

\begin{figure}[t!]
\centering
\includegraphics[width =3.5in]{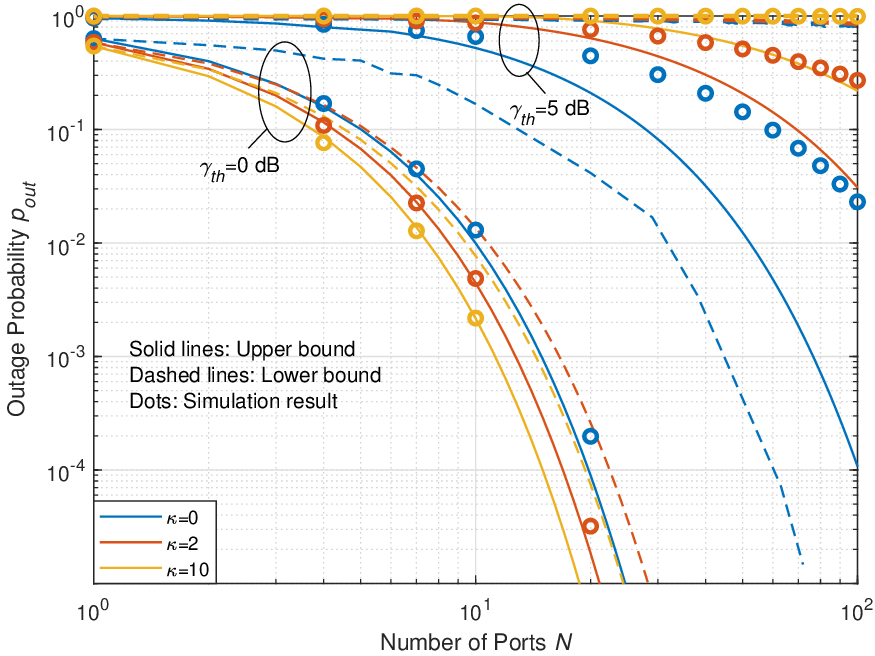}
\caption{Upper and lower bounds of OP when $\gamma_{th}$ = 5 dB and $\gamma_{th}$ = 0 dB under different Rician fading factor $\kappa$.}\label{uplow2}
\end{figure}

\subsection{Upper and Lower Bounds of OP}
 In Fig.~\ref{uplow2}, we examine the simulation and asymptotic results of OP. In particular, the results assess the accuracy of the bounds. To present a comprehensive range of results, we consider both low SNR threshold $\gamma_{th}=5$ dB and high SNR threshold $\gamma_{th}=0$ dB. Upon observing the graph, the decreasing tendency of the OP is well imitated by the bounds. However, it can be noted that as the SNR threshold decreases, the gap between the upper bound and the exact value does not significantly increase, while the gap between the lower bound and the simulation results noticeably widens. This outcome is related to the selected lower bound approximation method and is consistent with the mathematical analysis presented in the Appendix C. Though a large gap is seen in the low SNR regime, the upper bound can be a conservative estimate of OP. Further examination of Fig.~\ref{uplow2} reveals that when the SNR threshold decreases, the lower bound and the simulation results are nearly identical. This indicates that in the high SNR regime, the lower bound can be used as an approximation for the OP, thereby serving as a useful metric for evaluating the system performance.

\begin{figure}[t!]
\centering
\includegraphics[width =3.5in]{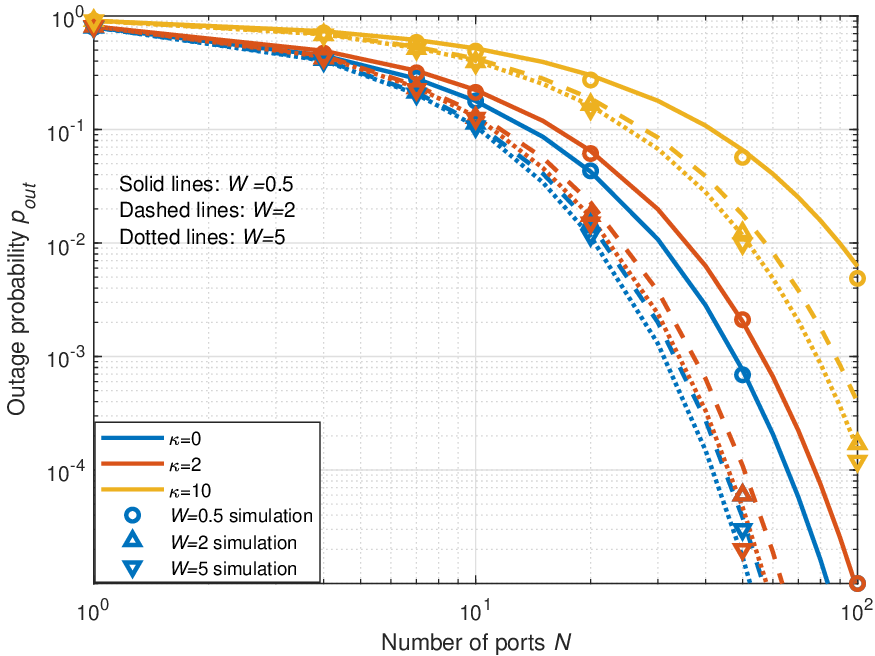}
\caption{The OP against the number of ports $N$ under different Rician fading factor $\kappa$ with $\gamma_{th}$ = 2 dB.}\label{opw}
\end{figure}

\subsection{Impact of FAS Size on OP} 
In Fig.~\ref{opw}, we investigate the impact of FAS size on the OP assuming ULA port configuration. Analytical and Monte Carlo simulation results are provided under various size of FAS $W\lambda$ in Fig.~\ref{opw}. In the results, the SNR threshold $\gamma_{th}$ is set to 2 dB. For different Rician fading factors \(\kappa\), we see that as the size of FAS increases, the OP decreases. This indicates that regardless of the ratio between the LoS and NLoS components, increasing the size of FAS can enhance the OP performance. Additionally, the results reveal that for different lengths of FAS, the OP consistently decreases as the number of ports \(N\) increases. It can be observed that for a 10-port FAS, when the Rician fading factor \(\kappa=0\), the OP can reach approximately $0.2$ with a space of $W=2\lambda$. If the space is increased to $W=5\lambda$ with 50 ports, the OP can be reduced to about \(1 \times 10^{-5}\). This demonstrates that for FAS, though space plays a role, achieving remarkable diversity within a small area is still feasible.

\begin{figure}[t!]
\centering
\includegraphics[width =3.5in]{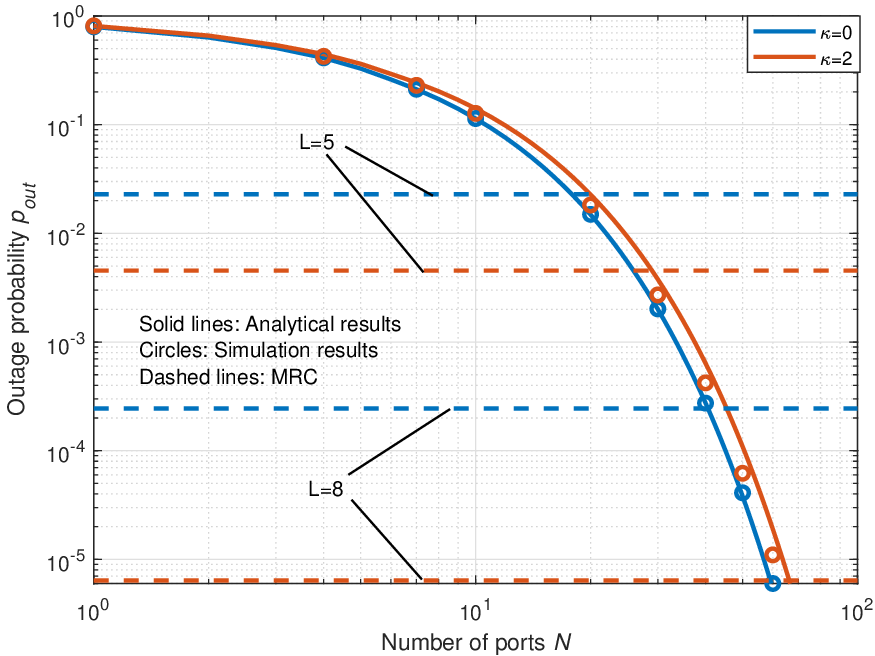}
\caption{The OP of FAS and MRC under different Rician fading factors $\kappa$ with $\gamma_{th}$ = 2 dB}\label{FASMRC}
\end{figure}

\subsection{Comparison of the Outage Performance between FAS and MRC} 
Fig.~\ref{FASMRC} shows both the analytical and simulation results for the OP of FAS and MRC system under various Rician fading factors \( \kappa \) and SNR thresholds \( \gamma_{th} \). The OP for the envelope of MRC with \( L \) branches under a Rician fading channel is expressed as \cite{MRC_Rician}:
\begin{equation}\label{MRC}
  p_{out}^{MRC} = 1 - {Q_L}\left( {\sqrt {2L\kappa } ,\sqrt {2\left( {\kappa  + 1} \right){\gamma _{th}}} } \right),
\end{equation}
where $Q_L$ is the $L$-order Marcum Q-function. In the ﬁgure, we provide the results for $L = 5, 8$.
It can be observed that as the number of ports $N$ increases, the OP of FAS decreases and becomes lower than that of the MRC. Given that the size of the FAS is set to $W\lambda= 2\lambda $, FAS can outperform MRC with 5 antennas if $N>30$. As the number of ports $N$ approaches 70, the FAS can outperform MRC with 8 antennas, which requires more physical space than FAS. Additionally, it is important to note that the number of RF chains in MRC must match the number of antennas, while in FAS, it always has one RF chain. This demonstrates that, FAS can achieve better performance than MRC by increasing the number of ports $N$ while requiring fewer RF chains.

\begin{figure}[t!]
\centering
\includegraphics[width =3.5in]{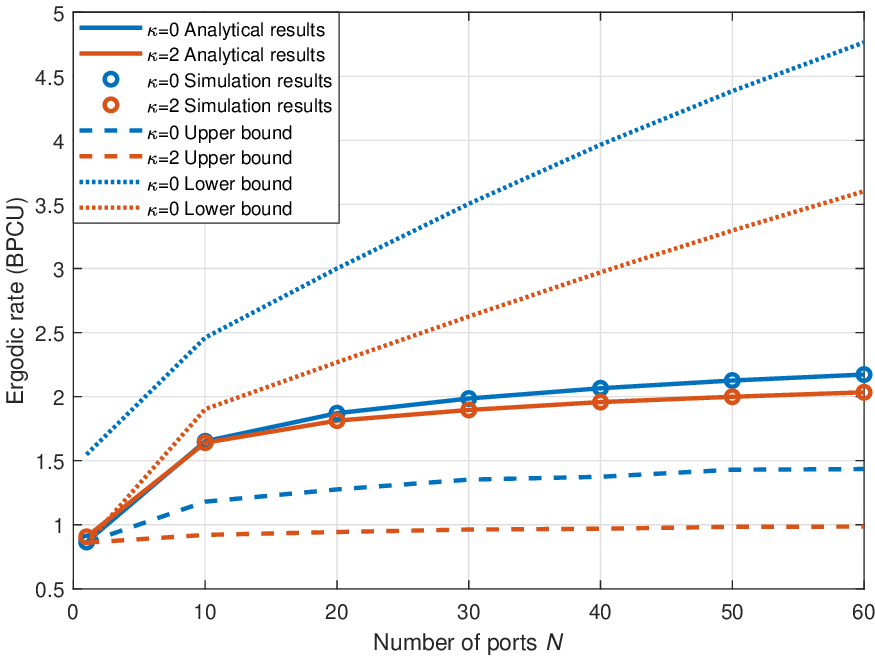}
\caption{The ER against the number of ports $N$ under different Rician fading factor $\kappa$.}\label{fig6}
\end{figure}

\subsection{Impact of the Number of Ports on ER}
Fig.~\ref{fig6} illustrates the ER of the Rx-SISO-FAS against the number of ports $N$. The solid lines, dotted lines, and dashed lines represent the simulation results, upper bounds, and lower bounds, respectively. We can observe that as the number of ports \(N\) increases, both the upper and lower bounds are good to imitate the rising trend of the ER when $N$ is small. However, for simulation results and the lower bound, when \(N\) continues to increase, the slope of the curves gradually decreases, while the upper bound develops a large gap. In contrast, the lower bound, although it also has a certain gap, shows a slope that is consistent with the simulation results, making it suitable for a conservative estimate of the system performance. It can also be seen that when the Rician fading factor $\kappa$ is small, the ER is relatively high. When the number of ports \(N\) reaches a certain value, the ER is maximum at $\kappa = 0$. 

\begin{figure}[t!]
\centering
\includegraphics[width =3.5in]{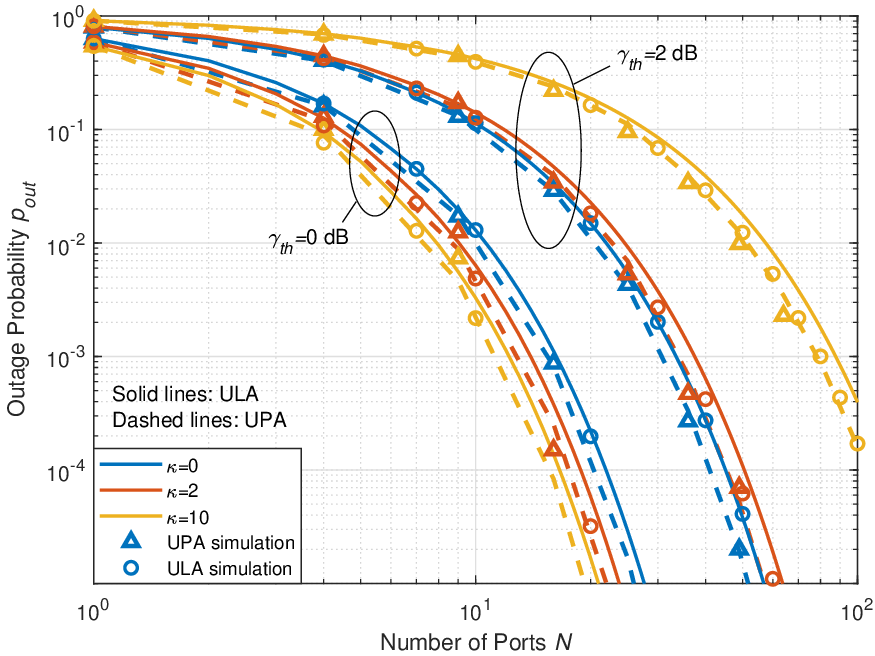}
\caption{The OP against the number of ports $N$ under ULA and UPA port configurations with $\gamma_{th}=0$ dB and $\gamma_{th}=2$ dB.}\label{upaop}
\end{figure}

\subsection{Impact of Port Configuration}
In Fig.~\ref{upaop}, we investigate the impact of the port configuration of FAS on the OP performance against the number of ports $N$. Fig.~\ref{upaop} presents the analytical and Monte Carlo simulation results for the OP of ULA and UPA configurations under different parameter settings. For comparison, both ULA and UPA configurations share the same number of ports. It is observed that when \(N\) is small, the change in port configuration does not significantly improve performance. This is because when the number of ports $N$ is sufficiently small, different port configurations have limited impact on the distances between the ports, resulting in only minor differences in spatial correlation. Consequently, the OP under two configurations are similar. However, as the number of ports $N$ increases, the difference in OP between the configurations becomes more pronounced. These results demonstrate that as the number of ports \(N\) increases, the impact of port configuration becomes more apparent, suggesting the possibility of utilizing more space for better OP performance. Finally, Fig.~\ref{fig8} studies the same but on the ER. It can be seen that as the number of ports \(N\) increases, the gap between the ER of UPA and that of ULA gradually widens.
 This difference becomes more obvious as the Rician fading factor decreases. 

\begin{figure}[t!]
\centering
\includegraphics[width =3.5in]{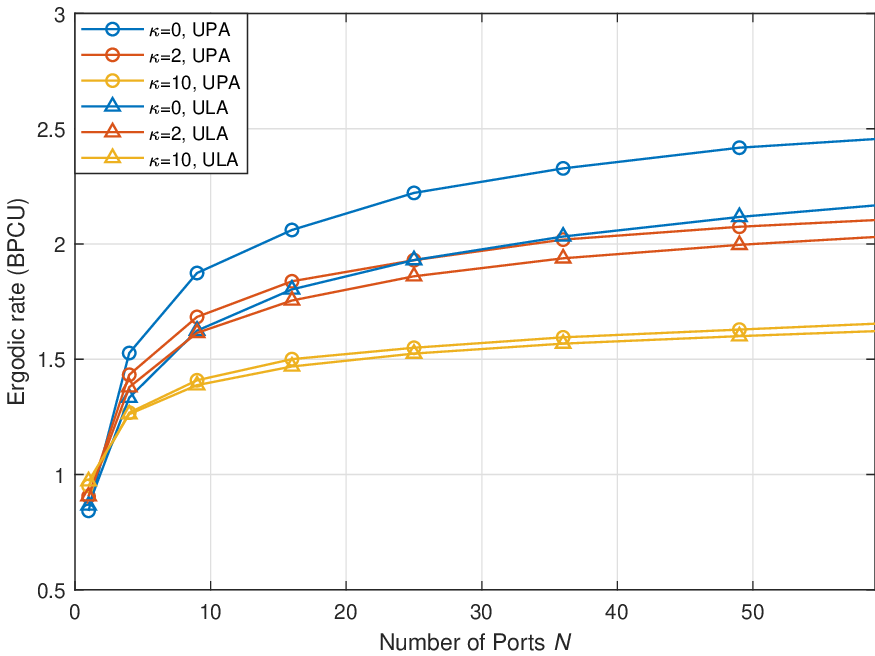}
\caption{The simulation results of ER against the number of ports $N$ under different Rician fading factor $\kappa$.}\label{fig8}
\end{figure}

\section{Conclusions}\label{sec:conclude}
In this paper, we analyzed the performance of Rx-SISO-FAS under Rician fading channels. We derived the exact OP expression and its closed-form upper and lower bounds, as well as the ER expressions and its closed-form upper bound. We then approximated the diversity order by the number of ports $N$ when the SNR was high enough. Additionally, we examined the impact of port configuration on the OP and ER system performance. The numerical results demonstrated that UPA port configuration outperforms ULA. Moreover, we benchmarked the outage performance of FAS against conventional MRC system for comparative analysis. Also, our numerical results indicated that the performance of FAS degraded significantly only in the low SNR regime with large Rician fading factor $\kappa$. Nevertheless, there are still many aspects of FAS that warrant further investigation to deepen our understanding. 

\numberwithin{equation}{section}
\section*{Appendix~A: Proof of Lemma~\ref{Lemma1:joint pdf}}\label{Appendix:As}
\renewcommand{\theequation}{A.\arabic{equation}}
\setcounter{equation}{0}
According to the channel model, conditioned on $x_0$, $y_0$, $|h_2|$ is Rician distributed~\cite{Rician_fading}, and we have 
\begin{equation}\label{A1}
\begin{aligned}
&{P_{\left| {{{\rm{h}}_2}} \right|\left| {{x_0},{y_0}} \right.}}\left( {{m_2}\left| {{x_0},{y_0}} \right.} \right) \\
&= \frac{{2{m_2}}}{{{\sigma ^2}\left( {1 - \rho _2^2} \right)}}{e^{ - \frac{{m_2^2 + \rho _2^2\left( {x_0^2 + y_0^2 + {A^2}} \right) + \left( {1 - \rho _2^2} \right){A^2}}}{{{\sigma ^2}(1 - \rho _2^2)}}}}\\
&\times{I_0}\left( {\frac{{2\sigma \sqrt {\rho _2^2\left( {x_0^2 + y_0^2 + {A^2}} \right) + \left( {1 - {\rho_2 ^2}} \right){A^2}} {m_2}}}{{{\sigma ^2}(1 - \rho _2^2)}}} \right).
\end{aligned}
\end{equation}
It can be observed that~\eqref{A1} only depends on $|h_1|$. For any port $n$ except port $1$, it can be rewritten as
\begin{equation}\label{A2}
\begin{aligned}
{P_{\left| {{{h}_n}} \right|\left| {\left| {{h_1}} \right|} \right.}}({m_n}\left| {{m_1}} \right.) &= \frac{{2{m_n}}}{{{\sigma ^2}(1 - \rho _n^2)}}{e^{ - \frac{{m_n^2 + \rho _n^2m_1^2 + \left( {1 - \rho _n^2} \right){A^2}}}{{{\sigma ^2}\left( {1 - \rho _n^2} \right)}}}}\\
&\times {I_0}\left( {\frac{{2\sqrt {\rho _n^2m_1^2 + \left( {1 - \rho _n^2} \right){A^2}} {m_n}}}{{{\sigma ^2}\left( {1 - \rho _n^2} \right)}}} \right).
\end{aligned}
\end{equation}
We can tell from \eqref{A2} that port $n$ is only related to port $1$. Given $h_1$, $h_2,\dots,h_N$ are all independent, we can obtain the conditional PDF given by
\begin{equation}\label{A3}
\begin{aligned}
&{P_{\left| {{h_2}} \right|,...,\left| {{h_N}} \right|\left| {\left| {{h_1}} \right|} \right.}}({m_1},{m_2},...,{m_N})\\
& = \prod\limits_{n = 2}^N {\frac{{2{m_n}}}{{{\sigma ^2}(1 - \rho _n^2)}}} {e^{ - \frac{{m_n^2 + \rho _n^2m_1^2 + (1 - \rho _n^2){A^2}}}{{{\sigma ^2}(1 - \rho _n^2)}}}}\\
&\times{I_0}\left( {\frac{{2\left( {\sqrt {\rho _n^2m_1^2 + (1 - \rho _n^2){A^2}} {m_n}} \right)}}{{{\sigma ^2}(1 - \rho _n^2)}}} \right).
\end{aligned}
\end{equation}
By multiplying \eqref{A3} with the PDF of $\left|h_1\right|$ given by (\ref{rician dis}), we have the joint PDF. For simplicity, we can express the PDF of $\left|h_1\right|$ in the form of \eqref{A2} by setting $\rho_1=0$ to obtain the desired result in \eqref{jpdf}, which completes the proof.

\numberwithin{equation}{section}
\section*{Appendix~B: Proof of Lemma~\ref{Lemma2:joint CDF}} \label{Appendix:Bs}
\renewcommand{\theequation}{B.\arabic{equation}}
\setcounter{equation}{0}
According to the joint PDF in (11), the joint CDF can be derived as
\begin{equation}\label{B1}
\begin{aligned}
&{F_{_{\left| {{h_1}} \right|,\left| {{h_2}} \right|,\dots,\left| {{h_N}} \right|}}}({m_1},{m_2},...,{m_N})\\
 &= P\left( {\left| {{h_1}} \right| < {m_1},\left| {{h_2}} \right| < {m_2},\dots,\left| {{h_N}} \right| < {m_N}} \right)\\
 &= \int_0^{{m_1}} {{\rm{\cdots}}} \int_0^{{m_N}} {{P_{\left| {{h_1}} \right|,\left| {{h_2}} \right|,\dots,\left| {{h_N}} \right|}}\left( {{t_1},{t_2},\dots,{t_N}} \right)} d{t_1}{\rm{\cdots}}d{t_N}\\
 &= \int_0^{{m_1}} {\frac{{2{t_1}}}{{{\sigma ^2}}}{e^{ - \frac{{t_1^2 + {A^2}}}{{{\sigma ^2}}}}}{I_0}\left( {\frac{{2{t_1}A}}{{{\sigma ^2}}}} \right)} \\
 &\times \prod\limits_{n = 2}^N {\int_0^{{m_n}} {\frac{{2{t_n}}}{{{\sigma ^2}(1 - \rho _n^2)}}{e^{ - \frac{{t_n^2 + \rho _n^2t_1^2 + (1 - \rho _n^2){A^2}}}{{{\sigma ^2}(1 - \rho _n^2)}}}}}}\\
 &\times{I_0}\left( {\frac{{2\left( {\sqrt {\rho _n^2t_1^2 + (1 - \rho _n^2){A^2}} {t_n}} \right)}}{{{\sigma ^2}(1 - \rho _n^2)}}} \right)  d{t_n}d{t_1}.
\end{aligned}
\end{equation}

The integral within the product operator represents the integration over the PDF of a Rician random variable. The LoS component \( v \) and the NLoS component \( \sigma_0 \) of the Rician random variable can be derived as follows:
\begin{equation}\label{B2}
\begin{aligned}
&\sigma _0^2 = {\sigma ^2}\left( {1 - \rho _n^2} \right)\\
&{v^2} = \rho _n^2m_1^2 + \left( {1 - \rho _n^2} \right){A^2}.
\end{aligned}
\end{equation}
We can represent the CDF of the Rician random variable by using the Marcum Q-function as
\begin{equation}\label{B3}
\begin{aligned}
F &= 1 - {Q_1}\left( {\frac{v}{{{\sigma _0}}},\frac{m}{{{\sigma _0}}}} \right)\\
&= 1 - {Q_1}\left( {\sqrt {\frac{{2\rho _n^2t_1^2}}{{{\sigma ^2}\left( {1 - \rho _n^2} \right)}} + 2\kappa} ,\sqrt {\frac{{2(1 + \kappa)}}{{{\sigma ^2}(1 - \rho _n^2)}}} m} \right).
\end{aligned}
\end{equation}

By substituting \eqref{B3} into \eqref{B1}, we have \eqref{B5} (see top of the next page).
\begin{figure*}
\begin{equation}\label{B5}
\begin{aligned}
&{F_{_{\left| {{h_1}} \right|,\left| {{h_2}} \right|,\dots,\left| {{h_N}} \right|}}}({m_1},{m_2},\dots,{m_N})\\
 &= \int_0^{{m_1}} {\frac{{2{t_1}}}{{{\sigma ^2}}}{e^{ - \frac{{t_1^2 + {A^2}}}{{{\sigma ^2}}}}}{I_0}\left( {\frac{{2{t_1}A}}{{{\sigma ^2}}}} \right)}  \prod\limits_{k = 2}^N {\left( {1 - {Q_1}\left[ {\sqrt {\frac{{2\rho _k^2t_1^2}}{{{\sigma ^2}\left( {1 - \rho _k^2} \right)}} + 2\kappa} ,\sqrt {\frac{{2(1 + \kappa)}}{{{\sigma ^2}(1 - \rho _k^2)}}} {m_k}} \right]} \right)} d{t_1}.
\end{aligned}
\end{equation}
\hrulefill
\end{figure*}
By changing the variable, the desired result is obtained.

\numberwithin{equation}{section}

\section*{Appendix~C: Proof of Corollary~\ref{Corollary1:lower bound}} \label{Appendix:Cs}
\renewcommand{\theequation}{C.\arabic{equation}}
\setcounter{equation}{0}
First, we determine the OP reduction for the $N$-th additional port to have~\eqref{C1} (see top of the next page). 
\begin{figure*}
\begin{equation}\label{C1}
\begin{aligned}
\Delta {p_{out}}\left( {{\gamma _{th}}} \right) &= {p_{out\left( {N - 1} \right)}} - {p_{out\left( N \right)}}\\
 &= \int_0^{{\gamma _{th}}} {{e^{ - \left( {\left( {\kappa  + 1} \right)t + \kappa } \right)}}{I_0}} \left( {2\sqrt {\kappa (\kappa  + 1)t} } \right){Q_1}\left( {\sqrt {\frac{{2\rho _N^2(\kappa  + 1)t}}{{(1 - \rho _N^2)}} + 2\kappa } ,\sqrt {\frac{{2(1 + \kappa )}}{{(1 - \rho _N^2)}}} \sqrt {{\gamma _{th}}} } \right)\\
 &\times \prod\limits_{n = 2}^{N - 1} {\left[ {1 - {Q_1}\left( {\sqrt {\frac{{2\rho _n^2(\kappa  + 1)t}}{{(1 - \rho _n^2)}} + 2\kappa } ,\sqrt {\frac{{2(1 + \kappa )}}{{(1 - \rho _n^2)}}} \sqrt {{\gamma _{th}}} } \right)} \right]} dt\\
 &= \left. {{Q_1}\left( {\sqrt {\frac{{2\rho _N^2(\kappa  + 1)t}}{{(1 - \rho _N^2)}} + 2\kappa } ,\sqrt {\frac{{2(1 + \kappa )}}{{(1 - \rho _N^2)}}} \sqrt {{\gamma _{th}}} } \right){p_{out\left( {N - 1} \right)}}} \right|_0^{{\gamma _{th}}}\\
 &- \int_0^{{\gamma _{th}}} {{{Q'}_1}\left( {\sqrt {\frac{{2\rho _N^2(\kappa  + 1)t}}{{(1 - \rho _N^2)}} + 2\kappa } ,\sqrt {\frac{{2(1 + \kappa )}}{{(1 - \rho _N^2)}}} \sqrt {{\gamma _{th}}} } \right)} {p_{out\left( {N - 1} \right)}}dt\\
& \le {Q_1}\left( {\sqrt {\frac{{2\rho _N^2(\kappa  + 1){\gamma _{th}}}}{{(1 - \rho _N^2)}} + 2\kappa } ,\sqrt {\frac{{2(1 + \kappa )}}{{(1 - \rho _N^2)}}} \sqrt {{\gamma _{th}}} } \right){p_{out\left( {N - 1} \right)}}
\end{aligned}
\end{equation}
\hrulefill
\end{figure*}
The second step in~\eqref{C1} can be proven using the monotonicity of the Marcum Q-function and the method of integration by parts in \cite{integral}. By rearranging the terms, we derive a recurrence formula, which allows us to establish the lower bound of OP as
\begin{equation}\label{C2}
\begin{aligned}
&{p_{out}}\left( {{\gamma _{th}}} \right) \ge {p_{out\left( 1 \right)}}\times\\
&\prod\limits_{n = 2}^N \left(1-{{Q_1}\left( {\sqrt {\frac{{2\rho _n^2(\kappa + 1)\gamma _{th}}}{{(1 - \rho _n^2)}} + 2\kappa} ,\sqrt {\frac{{2(1 + \kappa)}}{{(1 - \rho _n^2)}}} \sqrt {{\gamma _{th}}} } \right)}\right) .
\end{aligned}
\end{equation}
Finally, by substituting the expression in~\eqref{pout} when $N=1$ into~\eqref{C2}, we can complete the proof. It is important to note that as the upper limit of the integral \( \gamma_{th} \) increases, the omitted portion also increases, which may lead to a larger gap between the approximate and analytical results.

\section*{Appendix~D: Proof of Corollary~\ref{outup}} \label{Appendix:Ds}
\renewcommand{\theequation}{D.\arabic{equation}}
\setcounter{equation}{0}
According to \cite{marcum_appro}, in the case of $a<b$, the first-order Marcum Q-function can be approximated by
\begin{equation}\label{marcumq}
{Q_1}\left( {a,b} \right) \approx \sqrt {\frac{b}{a}} Q\left( {b - a} \right),
\end{equation}
in which $Q( \cdot )$ is the Gaussion Q-function with a lower bound given in \cite{Gauss_lower} as
\begin{equation}\label{D1}
\begin{aligned}
Q\left( x \right) &\ge \left( {\frac{{{e^{\frac{1}{{\left[ {\pi \left( {\kappa  - 1} \right) + 2} \right]}}}}}}{{2\kappa }}\sqrt {\frac{1}{\pi }\left( {\kappa  - 1} \right)\left[ {\pi \left( {\kappa  - 1} \right) + 2} \right]} } \right){e^{ - \frac{{\kappa {x^2}}}{2}}} \\
&= \alpha {e^{ - \frac{{\kappa {x^2}}}{2}}},
\end{aligned}
\end{equation}
where $\alpha  = \frac{{{e^{\frac{1}{{\left[ {\pi \left( {c  - 1} \right) + 2} \right]}}}}}}{{2c }}\sqrt {\frac{1}{\pi }\left( {c  - 1} \right)\left[ {\pi \left( {c  - 1} \right) + 2} \right]} $, and 
\begin{equation}
x = b - a= \sqrt {\frac{{2\left( {1 + \kappa} \right)}}{{\left( {1 - \rho _n^2} \right)}}} \left( {\sqrt {{\gamma _{th}}}  - \sqrt {\rho _n^2 + {A^2}\left( {1 - \rho _n^2} \right)} } \right).
\end{equation} 
By substituting the above into ~\eqref{D1} and letting \( t = \gamma_{th} \), we obtain
\begin{equation}\label{D3}
\begin{aligned}
&{Q_1}\left( {\sqrt {\frac{{2\rho _n^2(\kappa + 1)t}}{{(1 - \rho _n^2)}} + 2\kappa} ,\sqrt {\frac{{2(1 + \kappa)}}{{(1 - \rho _n^2)}}} \sqrt {{\gamma _{th}}} } \right)  \\
& > \frac{{\alpha {{\left[ {{\gamma _{th}}\left( {1 + \kappa } \right)} \right]}^{0.25}}}}{{\sqrt {\left| {{\rho _n}} \right|} {{\left[ {{\gamma _{th}}\left( {1 + \kappa } \right)} \right]}^{0.25}} + {{\left[ {\kappa \left( {1 - \rho _n^2} \right)} \right]}^{0.25}}}}\\
&\times {e^{ - \frac{c}{{1 - \rho _n^2}}\left( {{\gamma _{th}}\left( {\kappa  + 1} \right) + \kappa  - 2\sqrt {\kappa {\gamma _{th}}(\kappa  + 1)} } \right)}}\\
& = {\alpha _n}{e^{ - \frac{\kappa }{{1 - \rho _n^2}}{\gamma _n}}},
\end{aligned}
\end{equation}
By substituting \eqref{D3} into \eqref{pout}, we have the desired result.

\section*{Appendix~E: Proof of Corollary~\ref{erlow}} 
\label{Appendix:Es}
\renewcommand{\theequation}{E.\arabic{equation}}
\setcounter{equation}{0}
Due to the presence of a double integral, obtaining a closed form for the exact expression of ER is challenging. Therefore, it is necessary to approximate the original expression before integration. First, we derive the upper bound of the SNR's CDF using the method outlined in Appendix D as
\begin{equation}
\begin{aligned}\label{D}
F\left( x \right) &\le \left( {1 - {e^{ - x}}} \right)\prod\limits_{n = 2}^N {\left( {1 - {\alpha _n}{e^{ - {b_n}x_n}}} \right)} \\
& = 1 + \sum\limits_{n = 2}^N {\sum_{s \subseteq \left\{ {1,2,\dots,N} \right\}\atop \left| s \right| = n} {{{\left( { - 1} \right)}^n}\prod\limits_{i \subseteq s} {{a_i}} } } {e^{ - bi{x_n}}},
\end{aligned}
\end{equation}
where $b_n = \frac{c}{{1 - \rho _n^2}}$, and ${x _n} = {x}\left( {\kappa + 1} \right) + \kappa - 2\sqrt {\kappa{x}(\kappa + 1)} $.

We then simplify \eqref{D} by neglecting ${\kappa\left( {1 - \rho _n^2} \right)}$ in the denominator and $2\sqrt {\kappa{x}(\kappa + 1)}$ in the exponential so that
\begin{equation}
\hat F\left( x \right)  =  1 + \sum\limits_{n = 2}^N {\sum_{s \subseteq \left\{ {1,2,\dots,N} \right\}\atop \left| s \right| = n} {{{\left( { - 1} \right)}^n}\prod\limits_{i \subseteq s} {{a_i}} } } {e^{ - bix}}.
\end{equation}
By substituting the simplified $\hat F(x)$ into the expression of ER, we have
\begin{equation}\label{F3}
\begin{aligned}
&{\hat R_{N,\kappa }}\\
& = \frac{1}{{\ln \left( 2 \right)}}\int\limits_0^\infty  {\frac{{1 -\hat F\left( x \right)}}{{1 + x}}dx} \\
& = \frac{1}{{\ln \left( 2 \right)}}\sum\limits_{n = 2}^N {\sum_{s \subseteq \left\{ {1,2,\dots,N} \right\}\atop \left| s \right| = n} {{{\left( { - 1} \right)}^{n + 1}}\prod\limits_{i \subseteq s} {{a_i}} } } \int\limits_0^\infty  {\frac{{{e^{ - \sum\limits_{i \subseteq s} {{b_i}x} }}}}{{1 + x}}dx}.
\end{aligned}
\end{equation}
Note that in \cite{integral}, it is known that
\begin{equation}\label{F4}
\int_0^\infty  {\frac{{{e^{ - \mu x}}dx}}{{x + \beta }}}  =  - {e^{ - \mu \beta }}Ei\left( { - \mu \beta } \right).
\end{equation}
By performing simple changes of variables and substituting~\eqref{F4} into~\eqref{F3}, we obtain the desired result.

\bibliographystyle{IEEEtran}


\end{document}